\let\proof\relax   
\newtheorem{lemma}{Lemma}
\newtheorem{theorem}{Theorem}
\newtheorem{example}{Example}
\newcommand*{\transpose}{%
  {\mathpalette\@transpose{}}%
}
\begin{document}

\newcommand{\SB}[3]{
\sum_{#2 \in #1}\biggl|\overline{X}_{#2}\biggr| #3
\biggl|\bigcap_{#2 \notin #1}\overline{X}_{#2}\biggr|
}

\newcommand{\Mod}[1]{\ (\textup{mod}\ #1)}

\newcommand{\overbar}[1]{\mkern 0mu\overline{\mkern-0mu#1\mkern-8.5mu}\mkern 6mu}

\makeatletter
\newcommand*\nss[3]{%
  \begingroup
  \setbox0\hbox{$\m@th\scriptstyle\cramped{#2}$}%
  \setbox2\hbox{$\m@th\scriptstyle#3$}%
  \dimen@=\fontdimen8\textfont3
  \multiply\dimen@ by 4             % 4x the default rule thickness
  \advance \dimen@ by \ht0
  \advance \dimen@ by -\fontdimen17\textfont2
  \@tempdima=\fontdimen5\textfont2  % x-height
  \multiply\@tempdima by 4
  \divide  \@tempdima by 5          % 80% of the x-height
  % Modifications are only necessary if the top of the subscript is not that high:
  \ifdim\dimen@<\@tempdima
    \ht0=0pt                        % don't let the subscript interfere
    \@tempdima=\fontdimen5\textfont2
    \divide\@tempdima by 4          % 25% of the x-height
    \advance \dimen@ by -\@tempdima % if >0, add to depth of superscript!
    \ifdim\dimen@>0pt
      \@tempdima=\dp2
      \advance\@tempdima by \dimen@
      \dp2=\@tempdima
    \fi
  \fi
  #1_{\box0}^{\box2}%
  \endgroup
  }
\makeatother

\makeatletter
\renewenvironment{proof}[1][\proofname]{\par
  \pushQED{\qed}%
  \normalfont \topsep6\p@\@plus6\p@\relax
  \trivlist
  \item[\hskip\labelsep
        \itshape
%    #1\@addpunct{.}]\ignorespaces% DELETED
    #1\@addpunct{:}]\ignorespaces% ADDED
}{%
  \popQED\endtrivlist\@endpefalse
}
\makeatother

\makeatletter
\newsavebox\myboxA
\newsavebox\myboxB
\newlength\mylenA

\newcommand*\xoverline[2][0.75]{%
    \sbox{\myboxA}{$\m@th#2$}%
    \setbox\myboxB\null% Phantom box
    \ht\myboxB=\ht\myboxA%
    \dp\myboxB=\dp\myboxA%
    \wd\myboxB=#1\wd\myboxA% Scale phantom
    \sbox\myboxB{$\m@th\overline{\copy\myboxB}$}%  Overlined phantom
    \setlength\mylenA{\the\wd\myboxA}%   calc width diff
    \addtolength\mylenA{-\the\wd\myboxB}%
    \ifdim\wd\myboxB<\wd\myboxA%
       \rlap{\hskip 0.5\mylenA\usebox\myboxB}{\usebox\myboxA}%
    \else
        \hskip -0.5\mylenA\rlap{\usebox\myboxA}{\hskip 0.5\mylenA\usebox\myboxB}%
    \fi}
\makeatother

\xpatchcmd{\proof}{\hskip\labelsep}{\hskip3.75\labelsep}{}{}

\pagestyle{plain}

\title{\fontsize{20}{28}\selectfont Multi-Server Private Information Retrieval\\ with Coded Side Information}

\author{Fatemeh Kazemi, Esmaeil Karimi, Anoosheh Heidarzadeh, and Alex Sprintson\thanks{The authors are with the Department of Electrical and Computer Engineering, Texas A\&M University, College Station, TX 77843 USA (E-mail: \{fatemeh.kazemi, esmaeil.karimi, anoosheh, spalex\}@tamu.edu).}\thanks{A short version of this work was presented at the 16th Canadian Workshop on Information Theory (CWIT'19), Hamilton, Ontario, Canada, June 2019.}\thanks{This material is based upon work supported by the National Science Foundation under Grants No.~1718658 and 1642983.}}

\maketitle 

\thispagestyle{plain}

\begin{abstract}
In this paper, we study the multi-server setting of the \emph{Private Information Retrieval with Coded Side Information (PIR-CSI)} problem. In this problem, there are $K$ messages replicated across $N$ servers, and there is a user who wishes to download one message from the servers without revealing any information to any server about the identity of the requested message. The user has a side information which is a linear combination of a subset of $M$ messages in the database. The parameter $M$ is known to all servers in advance, whereas the indices and the coefficients of the messages in the user's side information are unknown to any server \emph{a priori}. 

We focus on a class of PIR-CSI schemes, referred to as \emph{server-symmetric schemes}, in which the queries/answers to/from different servers are symmetric in structure. We define the \emph{rate} of a PIR-CSI scheme as its minimum download rate among all problem instances, and define the \emph{server-symmetric capacity} of the PIR-CSI problem as the supremum of rates over all server-symmetric PIR-CSI schemes. Our main results are as follows: (i) when the side information is not a function of the user's requested message, the capacity is given by ${(1+{1}/{N}+\dots+{1}/{N^{\left\lceil \frac{K}{M+1}\right\rceil -1}})^{-1}}$ for any ${1\leq M\leq K-1}$; and (ii) when the side information is a function of the user's requested message, the capacity is equal to $1$ for $M=2$ and $M=K$, and it is equal to ${N}/{(N+1)}$ for any ${3 \leq M \leq K-1}$. The converse proofs rely on new information-theoretic arguments, and the achievability schemes are inspired by our recently proposed scheme for single-server PIR-CSI as well as the Sun-Jafar scheme for multi-server PIR.
\end{abstract}

% %Considering the minimum download rate of a PIR-CSI scheme amongst all problem instances as the \emph{rate} of the scheme, we define the \emph{server-symmetric capacity} of the PIR-CSI problem as the supremum of rates over all server-symmetric PIR-CSI schemes. 

%maximum achievable download rate in worst-case (amongst all problem instances)

\section{introduction}
% chor1995private,
In the Private Information Retrieval (PIR) problem, there is a user who wishes to download a single or multiple messages belonging to a database with copies stored on a single or multiple servers, while protecting the identity of the demanded message(s) from the server(s)~\cite{sun2017capacity,banawan2018capacity}. This setup was recently extended to the settings wherein the user has some side information (unknown to the server(s)) about the messages in the database~\cite{kadhe2017private,heidarzadeh,li2018single, heidarzadeh2018capacity,heidarzadeh2019capacity,tandon2017capacity, wei2018fundamental,li2018converse,chen2017capacity,shariatpanahi2018multi,kazemi2019single}.

For the PIR problem in the presence of side information, the savings in the download cost (i.e., the amount of information downloaded
from the server(s)) depend on whether the user wants to achieve $W$\textit{-privacy} (i.e., only the privacy of the requested message(s) needs to be protected), or $(W,S)$\textit{-privacy} (i.e., the privacy of both the requested message(s) and the messages in the side information need to be protected). 

The settings in which the side information is a subset of messages, referred to as \emph{PIR with Side Information (PIR-SI)} when $W$-privacy is required, and \emph{PIR with Private Side Information (PIR-PSI)} when $(W,S)$-privacy is required, were studied in~\cite{kadhe2017private,heidarzadeh,li2018single,li2018converse,chen2017capacity,shariatpanahi2018multi}. Recently, in~\cite{heidarzadeh2018capacity} and \cite{heidarzadeh2019capacity}, we studied the single-server case of a related problem wherein the side information is a linear combination of a subset of messages. This problem is referred to as \textit{PIR with Coded Side Information (PIR-CSI)} when $W$-privacy is required, and \textit{PIR with Private Coded Side Information (PIR-PCSI)} when $(W,S)$-privacy is required. %In these problems, the side information is a linear combination of a subset of messages. 

In this work, we consider the multi-server case of the PIR-CSI problem. In this problem, there is a database of $K$ messages replicated across $N$ servers, and there is a user who wants to download a single message from the servers, while revealing no information about the requested message to any server. We assume that the user has a linear combination of a subset of $M$ messages in the database as side information. Also, we assume that the side information size $M$ is known to all servers in advance, but no server knows the indices and the coefficients of the messages in the user's side information in advance. These assumptions are motivated by several practical scenarios. For instance, the user could have obtained their side information from a trusted entity with limited knowledge about the database, %or through overhearing in a wireless network, 
or from the information locally stored in their cache. 
\subsection{Main Contributions}

%Following our previous line of work, 

We focus on a class of PIR-CSI schemes, which we refer to as \emph{server-symmetric}, where the queries/answers to/from different servers are symmetric in structure. Such schemes are of particular interest in practice for their simple implementation. %, due to the symmetry of queries/answers across the servers. 
We define the \emph{rate} of a PIR-CSI scheme as its minimum download rate (i.e., the inverse of the normalized download cost) amongst all problem instances, and define the \emph{server-symmetric capacity} of the PIR-CSI problem as the supremum of rates over all server-symmetric PIR-CSI schemes. 
%\emph{minimum achievable download rates amongst all problem instances}, within the class of server-symmetric schemes. 
% all \emph{achievable download rates in worst-case}

For the settings in which the side information is not a function of the requested message, we show that the capacity is given by ${(1+{1}/{N}+\dots+{1}/{N^{\left\lceil \frac{K}{M+1}\right\rceil -1}})^{-1}}$ for any ${1\leq M\leq K-1}$. Interestingly, the capacity in this case is the same as that of multi-server PIR-SI under the server-symmetry assumption~\cite{li2018converse}, where the user's side information is comprised of $M$ uncoded messages. Moreover, comparing this result with the capacity of multi-server PIR without side information~\cite{sun2017capacity}, one can see that having a coded side information (which is not a function of the demanded message) of size $M$ reduces the effective number of messages from $K$ to $\lceil K/(M+1)\rceil$. 

For the settings wherein the side information is a function of the requested message, we show that the capacity is equal to $1$ for $M=2$ and $M=K$, and it is equal to $N/(N+1)$ for any $3 \leq M \leq K-1$. Again, a comparison of these results with the capacity of multi-server PIR without side information reveals that having a coded side information (which is a function of the demanded message) of size $M\in \{2,K\}$ and $M\in \{3,\dots,K-1\}$ reduces the effective number of messages from $K$ to $1$ and $2$, respectively. 

The converse proofs rely on new information-theoretic arguments, and the achievability schemes are inspired by our proposed scheme in~\cite{heidarzadeh2018capacity} for single-server PIR-CSI as well as the Sun-Jafar scheme of~\cite{sun2017capacity} for multi-server PIR.

\section{Problem Formulation}\label{sec:SN}
Throughout, we denote random variables by bold-face letters and their realizations by regular letters. 

Let $\mathbb{F}_q$ be a finite field of size $q$ for some prime $q$, and let $\mathbb{F}_q^{\times} \triangleq \mathbb{F}_q\setminus \{0\}$ be the multiplicative group of $\mathbb{F}_q$. Let $\mathbb{F}_{q^m}$ be an extension field of $\mathbb{F}_q$ for some integer $m\geq 1$, and let $L \triangleq m\log_2 q$. For an integer $i\geq 1$, let $[i]\triangleq \{1,\dots,i\}$. 

%  $X_{[K]}\triangleq \{X_1,\dots,X_K\}$,
There are $N$ non-colluding servers, each storing an identical copy of $K$ messages $X_1,\dots,X_K$ where $\mathbf{X}_i$ for $i\in [K]$ is independently and uniformly distributed over $\mathbb{F}_{q^m}$, i.e., ${H(\mathbf{X}_i) = L}$ for all $i\in [K]$ and ${H(\mathbf{X}_1,\dots,\mathbf{X}_K) = KL}$. We denote ${X_{[K]}\triangleq\{X_1,\dots,X_K\}}$ and ${\mathbf{X}_{[K]}\triangleq\{\mathbf{X}_1,\dots,\mathbf{X}_K\}}$. There is a user who wishes to download a message $X_W$ for some ${W\in [K]}$ from the servers. We refer to $W$ as the \emph{demand index} and $X_W$ as the \emph{demand}. The user has a linear combination ${Y^{[S,C]}\triangleq \sum_{i\in S} c_i X_i}$ for some $S \triangleq \{i_1,\dots,i_M\}\subseteq [K]$ and ${C \triangleq \{c_{i_1},\dots,c_{i_M}\}}$ with $c_i\in \mathbb{F}^{\times}_q$. We refer to $Y^{[S,C]}$ as the \emph{side information}, $S$ as the \emph{side information index set}, and $M$ as the \emph{side information size}. 

Let $\mathcal{S}$ be the set of all $M$-subsets of $[K]$, and let $\mathcal{C}$ be the set of all length-$M$ sequences (i.e., ordered sets of size $M$) with elements from $\mathbb{F}^{\times}_q$. %Let $\boldsymbol{S}$, $\boldsymbol{C}$, and $\boldsymbol{W}$ be random variables corresponding to the $S$, $C$, and $W$, respectively. 
We assume that $\mathbf{S}$ and $\mathbf{C}$ are uniformly distributed over $\mathcal{S}$ and $\mathcal{C}$, respectively. We consider two different models for the conditional distribution of $\mathbf{W}$ given $\mathbf{S}$, depending on whether $\mathbf{W}\not\in \mathbf{S}$ (Model~I) or $\mathbf{W}\in \mathbf{S}$ (Model~II). For Model~I, we assume that $\mathbf{W}$ is distributed uniformly over $[K]\setminus S$ given that $\mathbf{S}=S$; and for Model~II, we assume that $\mathbf{W}$ is distributed uniformly over $S$ given that $\mathbf{S}=S$. %(It is easy to see that for both models~I and~II, the marginal distribution of $\mathbf{W}$ is uniform over $[K]$.) 
Note that Models~I and~II are well-defined for $1\leq M\leq K-1$ and $2\leq M\leq K$, respectively.  

%In this work, depending on the presence or absence of the demand index in the side information index set, two different models are considered for the conditional pmf of $\boldsymbol{W}$ given $\boldsymbol{S}=S$ as follows: 

\begin{comment}
\subsubsection*{Model~I} $\boldsymbol{W}$ is distributed uniformly over $[K]\setminus S$, i.e., 
\begin{equation*}
p_{\boldsymbol{W}|\boldsymbol{S}}(W|S) = 
\left\{\begin{array}{ll}
\frac{1}{K-M}, & W\not\in S,\\	
0, & \text{otherwise},
\end{array}\right.	
\end{equation*} 
where ${0\leq M \leq K-1}$.
\subsubsection*{Model~II} $\boldsymbol{W}$ is distributed uniformly over $S$, i.e., 
\begin{equation*}
p_{\boldsymbol{W}|\boldsymbol{S}}(W|S) = 
\left\{\begin{array}{ll}
\frac{1}{M}, & W\in S,\\	
0, & \text{otherwise},
\end{array}\right.	
\end{equation*} 
where ${2\leq M\leq K}$.
\end{comment}

%Let $I^{[W,S]}$ denote an indicator function such that $I^{[W,S]} = 1$ if $W\in S$, and $I^{[W,S]} = 0$ if $W\not\in S$. In the model~I, ${I^{[W,S]} = 0}$, and in the model~II, ${I^{[W,S]}=1}$. 

We assume that the servers initially know the side information size $M$, the considered model (i.e., whether $\mathbf{W}\not\in \mathbf{S}$ or $\mathbf{W}\in \mathbf{S}$), and the distributions of $\mathbf{S}$ and $\mathbf{C}$, and the conditional distribution of $\mathbf{W}$ given $\mathbf{S}$; whereas the servers have no information about the realizations $W,S,C$ \emph{a priori}. 

In order to retrieve $X_W$ for any given $W,S,C$, the user generates $N$ queries $Q_n^{[W,S,C]}$ for $n\in [N]$, and sends to the $n$th server the query $Q_n^{[W,S,C]}$, which is a (potentially stochastic) function of $W,S,C,Y^{[S,C]}$. 

Upon receiving $Q_n^{[W,S,C]}$, the $n$th server responds to the user with an answer $A_n^{[W,S,C]}$, which is a (deterministic) function of the query $Q_n^{[W,S,C]}$ and the messages in $X_{[K]}$. In particular, ${(\mathbf{W},\mathbf{S},\mathbf{C}) \rightarrow (\mathbf{Q}_n^{[\mathbf{W},\mathbf{S},\mathbf{C}]},\mathbf{X}_{[K]}) \rightarrow \mathbf{A}_n^{[\mathbf{W},\mathbf{S},\mathbf{C}]}}$ is a Markov chain, and ${H(\mathbf{A}_n^{[\mathbf{W},\mathbf{S},\mathbf{C}]}| \mathbf{Q}_n^{[\mathbf{W},\mathbf{S},\mathbf{C}]},\mathbf{X}_{[K]}) = 0}$. 

For the simplicity, we denote ${Q}^{[W,S,C]}\triangleq\{Q_n^{[W,S,C]}\}_{n\in [N]}$ and $A^{[W,S,C]}\triangleq\{A_n^{[W,S,C]}\}_{n\in [N]}$. % by  ${Q}^{[W,S,C]}$.

\emph{\textbf{Recoverability condition}}:
The user must be able to retrieve the demand $X_W$ from the answers $A^{[W,S,C]}$ along with the side information $Y^{[S,C]}$, i.e.,
\[H(\mathbf{X}_{\mathbf{W}}| \mathbf{A}^{[\mathbf{W},\mathbf{S},\mathbf{C}]},\mathbf{Q}^{[\mathbf{W},\mathbf{S},\mathbf{C}]}, \mathbf{Y}^{[\mathbf{S},\mathbf{C}]},\mathbf{W},\mathbf{S},\mathbf{C})=0.\]

\emph{$\bm{W}$\textbf{-privacy condition}}:
For each $n\in [N]$, the query $Q^{[W,S,C]}_n$ must protect the privacy of the user's demand index $W$ from the $n$th server, i.e., for all $n\in [N]$,
\[I(\mathbf{W}; \mathbf{Q}_n^{[\mathbf{W},\mathbf{S},\mathbf{C}]},\mathbf{A}_n^{[\mathbf{W},\mathbf{S},\mathbf{C}]},\mathbf{X}_{[K]})=0.\] 

% , and for each model (I or~II)

The problem is to design a protocol for generating queries $Q^{[W,S,C]}_n$ and their corresponding answers $A^{[W,S,C]}_n$ (for any given $W,S,C$) that satisfy the $W$-privacy and recoverability conditions. %We refer to this problem as \emph{multi-server single-message Private Information Retrieval with Coded Side Information (\emph{PIR-CSI})}. 
We refer to this problem as the \emph{Private Information Retrieval with Coded Side Information (PIR-CSI)}. In particular, we refer to this problem under Model~I (or Model~II) as the \emph{PIR-CSI--I} (or \emph{PIR-CSI--II}) \emph{problem}. % and refer to any collection of queries/answers (for all $W$, $S$, $C$ such that $W\not\in S$ (or ${W\in S}$)) that satisfy the $W$-privacy and recoverability conditions as a \emph{PIR-CSI--I} (or \emph{PIR-CSI--II}) \emph{protocol}.

We focus on server-symmetric PIR-CSI protocols in which the queries/answers to/from different servers are symmetric in structure. In particular, we say that a PIR-CSI--I (or respectively, PIR-CSI--II) protocol is \emph{server-symmetric} if  \[(\mathbf{Q}^{[W,S,C]}_{n},\mathbf{A}^{[W,S,C]}_{n},\mathbf{X}_{[K]})\sim (\mathbf{Q}^{[W,S,C]}_{n'},\mathbf{A}^{[W,S,C]}_{n'},\mathbf{X}_{[K]})\]  holds  for all $n,n'\in [N]$ and for any $W\in [K],S\in \mathcal{S},C\in \mathcal{C}$ such that ${W\not\in S}$ (or respectively, ${W\in S}$), where the relation $\mathbf{U}\sim \mathbf{V}$ means that $\mathbf{U}$ and $\mathbf{V}$ have identical distributions. It should be noted that most of the existing multi-server PIR schemes (with information-theoretic guarantees) are server-symmetric. Server-symmetric schemes are particularly of interest because the symmetry of queries/answers across the servers makes the implementation quite simple in practice. 

We define the \emph{rate} of a PIR-CSI--I (or PIR-CSI--II) protocol as the ratio of the entropy of a message, i.e., $L$, to the maximum total entropy of the answers from all servers, i.e., ${D\triangleq\max_{\{W,S,C\}} H(\mathbf{A}^{[W,S,C]})}$, where the maximization is over all $W,S,C$ such that $W\not\in S$ (or $W\in S$). We also define the \emph{server-symmetric capacity} of the PIR-CSI--I (or PIR-CSI--II) problem as the supremum of rates over all server-symmetric PIR-CSI--I (or PIR-CSI--II) protocols.

%the ratio of the message length (in bits), i.e., $L$, to the expected total length of answers from all servers (in bits), i.e., $H(\boldsymbol{A}^{[\boldsymbol{W},\boldsymbol{S},\boldsymbol{C}]})$. 

\begin{comment}
(expected (over random queries) total number of bits downloaded by the user), i.e., $D$, where the average is taken over all $W$, $S$, and $C$ such that ${I^{[W,S]} = 0}$ or ${I^{[W,S]} = 1}$, respectively. That is, for a PIR-CSI--I or PIR-CSI--II protocol, $D$ is given by
\[\sum H(A_1^{[W,S,C]},\dots,A_N^{[W,S,C]})p_{\boldsymbol{W}|\boldsymbol{S}}(W|S)p_{\boldsymbol{S}}(S)p_{\boldsymbol{C}}(C),\] where $H(A_1^{[W,S,C]},\dots,A_N^{[W,S,C]})=\sum_{j=1}^N H(A_j^{[W,S,C]})$, and the summation is over all $W$, $S$, and $C$ such that $I^{[W,S]} = 0$ or $I^{[W,S]} = 1$, respectively. 
\end{comment}

Our goal is to characterize the server-symmetric capacity of the PIR-CSI--I and PIR-CSI--II problems, and to design server-symmetric protocols that are capacity-achieving. 

%For simplifying the notation, throughout the rest of the paper, we use non-bold symbols also for random variables.  

%We restrict our attention to the server-symmetric setting in which the servers see the same potential side information for each message being as a demand. Lemma \ref{privacy & Recoverability} gives a necessary condition for $W$-privacy and recoverability in a server-symmetric setting. 

\section{Main Results}

In this section, we present our main results. Theorems~\ref{thm:PIRCSI-I} and~\ref{thm:PIRCSI-II} characterize the server-symmetric capacity of the PIR-CSI--I and PIR-CSI--II problems, respectively. %The proofs are given in Sections~\ref{sec:PIR-CSI-I} and \ref{sec:PIR-CSI-II}, respectively.

\begin{theorem}\label{thm:PIRCSI-I}
The server-symmetric capacity of PIR-CSI--I problem with $N$ servers, $K$ messages, and side information size $0\leq M \leq K-1$ is given by
\[{\mathrm{C}_{W-\text{\it I}}} = \left(1+\frac{1}{N}+\dots+\frac{1}{N^{\lceil \frac{K}{M+1}\rceil -1}}\right)^{-1}.\]
\end{theorem}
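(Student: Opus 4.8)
The plan is to establish matching achievability and converse bounds, with $\tilde K\triangleq\lceil K/(M+1)\rceil$ acting as an effective number of messages.

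\emph{Achievability.} Partition $[K]$ into $\tilde K$ blocks each of size at most $M+1$, with one block---the \emph{demand block}---equal to $\{W\}\cup S$ (possible since $W\notin S$ and $|S|=M$) and the remaining $K-M-1$ indices spread among the other $\tilde K-1$ blocks by a uniformly random partition. To each block $B_j$ associate one \emph{super-message} $\tilde X_j$, a fixed $\mathbb F_q$-linear combination of $\{\mathbf X_i:i\in B_j\}$ with at least one nonzero coefficient; since the blocks are disjoint the $\tilde X_j$ are independent and uniform on $\mathbb F_{q^m}$. For the demand block take $\tilde X_{j^\star}=c'_W\mathbf X_W+\sum_{i\in S}c_i\mathbf X_i$ with $c'_W\in\mathbb F_q^{\times}$ arbitrary, so that $\tilde X_{j^\star}-\mathbf Y^{[S,C]}=c'_W\mathbf X_W$. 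Now run the Sun--Jafar $N$-server PIR scheme of~\cite{sun2017capacity} on the $\tilde K$ super-messages to privately retrieve $\tilde X_{j^\star}$; since each $\tilde X_j$ is a deterministic linear function of $\mathbf X_{[K]}$, every Sun--Jafar query is a request for $\mathbb F_q$-linear combinations of $\mathbf X_{[K]}$, which the servers can answer. The scheme always involves exactly $\tilde K$ super-messages of entropy $L$ and retrieves one of them, so its download is $L(1+1/N+\dots+1/N^{\tilde K-1})$ on every instance, matching the claimed rate. $W$-privacy holds because (i) the Sun--Jafar query distribution for a fixed demanded super-message does not depend on which one is demanded, hiding the demand block, and (ii) the random partition of the non-demand indices, together with the Sun--Jafar internal randomization and the freedom in the super-message coefficients, guarantees that for every alternative demand $W'\in[K]$ there are a side-information realization $(S',C')$ with $W'\notin S'$ and a partition producing the same queries with the same probability; the uniformity of $\mathbf S$, $\mathbf C$ and $\mathbf W\mid\mathbf S$ then yields $I(\mathbf W;\mathbf Q_n,\mathbf A_n,\mathbf X_{[K]})=0$. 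This construction parallels our single-server scheme in~\cite{heidarzadeh2018capacity} and the PIR-SI scheme of~\cite{kadhe2017private}.

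\emph{Converse.} I would prove the recursion $D\ge L+\tfrac1N D'$, where $D'$ is the optimal download for the PIR-CSI--I problem with $N$ servers, $K-M-1$ messages, and side-information size $\min\{M,K-M-2\}$ (so for $M=0$ this is exactly the Sun--Jafar recursion); unrolling it $\tilde K-1$ times down to the base case $D\ge L$---valid when $\tilde K=1$, since $\mathbf X_W$ must be recoverable and $H(\mathbf X_W\mid\mathbf Y^{[S,C]})=L$---gives $D\ge L(1+1/N+\dots+1/N^{\tilde K-1})$. For the recursion step, fix an instance $(W_1,S_1,C_1)$ with $B_1\triangleq\{W_1\}\cup S_1$, $|B_1|=M+1$; write $\mathbf A_n$, $\mathbf Q$, $\mathbf Y$ for its answers, queries and side information. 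Using that $\mathbf A_{[N]}$ is a function of $(\mathbf Q,\mathbf X_{[K]})$, that $\mathbf Y$ is a function of $\mathbf X_{B_1}$, that $\mathbf X_{W_1}\perp(\mathbf Q,\mathbf Y)$, the recoverability condition, and the chain rule of mutual information,
\[
D\ \ge\ \sum_{n=1}^N H(\mathbf A_n\mid\mathbf Q)\ \ge\ H(\mathbf A_{[N]}\mid\mathbf Q)\ \ge\ I(\mathbf A_{[N]};\mathbf X_{B_1}\mid\mathbf Q,\mathbf Y)+H(\mathbf A_{[N]}\mid\mathbf X_{B_1},\mathbf Q)\ \ge\ L+H(\mathbf A_{[N]}\mid\mathbf X_{B_1},\mathbf Q),
\]
and then $H(\mathbf A_{[N]}\mid\mathbf X_{B_1},\mathbf Q)\ge\tfrac1N\sum_{n=1}^N H(\mathbf A_n\mid\mathbf X_{B_1},\mathbf Q)$ since a joint entropy is at least the average of its marginals. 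Finally, $W$-privacy transfers the sum $\sum_n H(\mathbf A_n^{[W_1,S_1,C_1]}\mid\mathbf X_{B_1},\mathbf Q^{[W_1,S_1,C_1]})$ to the analogous sum for a second instance $(W_2,S_2,C_2)$ whose support $S_2$ avoids $B_1$; conditioned on $\mathbf X_{B_1}$ this is the transcript of a valid server-symmetric PIR-CSI--I protocol for the residual $K-M-1$ messages, whose download is at least $D'$, closing the recursion.

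\emph{Main obstacle.} The subtle step is the privacy transfer inside the recursion. The $W$-privacy condition constrains $(\mathbf Q_n,\mathbf A_n,\mathbf X_{[K]})$ only after averaging over the random side-information support and coefficients, so the pointwise identity ``$\mathbf A^{[W_1,S_1,C_1]}$ and $\mathbf A^{[W_2,S_2,C_2]}$ are identically distributed'' that drives the classical single-demand converse is not available. I expect to overcome this by carrying the entire argument on the mixture distributions obtained by conditioning only on the demand index (over which $W$-privacy is exactly invariant), using the exchangeability of $\mathbf X_{[K]}$ and the uniformity of $\mathbf S,\mathbf C$ to align the conditioning blocks across the two demands, and using the max-over-instances definition of $D$ to choose the instances that make the bound tight. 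A secondary point---relevant when $K$ is not a multiple of $M+1$---is to verify that the conditioned residual problem is genuinely a (smaller) PIR-CSI--I instance, in particular that it admits a side-information realization whose support avoids $B_1$.
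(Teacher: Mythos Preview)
Your achievability has a privacy leak whenever $(M+1)\nmid K$. A \emph{disjoint} partition of $[K]$ into $\tilde K$ blocks forces some blocks to have fewer than $M+1$ indices, while your demand block always has exactly $M+1$; even after Sun--Jafar hides \emph{which} block is demanded, the server still learns that $W$ lies in one of the size-$(M+1)$ blocks, so $I(\mathbf W;\mathbf Q_n)>0$ in general. The paper fixes this not with a partition but with the Randomized Partitioning scheme of~\cite{heidarzadeh2018capacity}: it builds $\tilde K$ \emph{overlapping} length-$(M+1)$ index sequences, where the extra $\tilde K(M+1)-K$ repeated indices are drawn according to a carefully designed distribution (over how many repeats come from $\{W\}$, from $S$, and from $[K]\setminus(\{W\}\cup S)$) that equalizes the posterior of every index being the demand.

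On the converse, the paper's route is close to yours but sidesteps both issues you flag. First, the ``pointwise'' identity you call the main obstacle is precisely the paper's Lemma~\ref{privacy & Recoverability}: under $W$-privacy together with server-symmetry, for any $(W,S,C)$ and any alternative $W'$ there exist $S',C'$ (the \emph{same} pair for all $n$) with $(\mathbf Q_n^{[W,S,C]},\mathbf A_n^{[W,S,C]},\mathbf X_{[K]})\sim(\mathbf Q_n^{[W',S',C']},\mathbf A_n^{[W',S',C']},\mathbf X_{[K]})$; the paper invokes this directly, so no mixture-level argument is needed. Second, the paper does \emph{not} reduce to a smaller PIR-CSI--I instance. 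Your reduction breaks because Lemma~\ref{privacy & Recoverability} only produces \emph{some} $S_2$ with $W_2\notin S_2$, not one disjoint from $B_1$, so the ``residual side-information size $\min\{M,K-M-2\}$'' is not well-defined across instances and the recursion does not close as stated. Instead the paper conditions only on $(\mathbf X_W,\mathbf Y)$ (not all of $\mathbf X_{B_1}$), picks $W_1\notin\{W\}\cup S$, swaps $(W,S,C)\to(W_1,S_1,C_1)$ at a single server via the lemma, averages over $n$ by server-symmetry, then adds $\mathbf Y_1$ to the conditioning and peels off $H(\mathbf X_{W_1})=L$. Because at step $i$ only $\bigl|\bigcup_{j\le i}(\{W_j\}\cup S_j)\bigr|\le (i{+}1)(M{+}1)$ indices are excluded, one can iterate $\tilde K-1$ times regardless of how the $S_j$ overlap, yielding $D\ge L(1+1/N+\dots+1/N^{\tilde K-1})$.
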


This result is interesting because it shows that the capacity in this case is the same as the capacity of multi-server PIR-SI~\cite{li2018converse,kadhe2017private} where $M$ uncoded messages are available at the user as side information. That is, %This means that %, in order to minimize the download cost, 
knowing only one linear combination of $M$ messages as side information would be as effective as knowing $M$ (uncoded) messages separately.

%For the converse proof, we lower bound the total entropy of the answers, or equivalently, upper bound the rate of any server-symmetric PIR-CSI--I protocol. For the achievability proof, we propose a scheme which builds upon the capacity achieving PIR schemes presented in \cite{heidarzadeh2018capacity} for single-server PIR with coded side information and \cite{sun2017capacity} for multi-server PIR without side information.

\begin{theorem}\label{thm:PIRCSI-II}
The server-symmetric capacity of PIR-CSI--II problem with $N$ servers, $K$ messages, and side information size $2\leq M\leq K$ is given by

\[
{\mathrm{C}_{W-\text{II}}} = \begin{cases}
1, & \quad M=2,K,\\
\frac{N}{N+1}, & \quad 3 \leq M \leq K-1.
\end{cases} .
\]

\end{theorem}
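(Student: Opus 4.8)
\medskip
\noindent\textbf{Proof plan.} I would handle the two regimes of $M$ separately, establishing in each a converse and a matching server-symmetric achievability scheme; throughout, each message $X_i$ is split into $N$ equal parts $X_i=(X_i^{(1)},\dots,X_i^{(N)})$, and server-symmetry is obtained by treating these $N$ parts symmetrically across the servers.

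\emph{Regime $M\in\{2,K\}$.} The converse is immediate: for any instance with $W\in S$ and $M\ge 2$, the $M$ messages indexed by $S$ are uniform over $\mathbb{F}_{q^m}^{M}$ while $Y^{[S,C]}$ is a single nonzero linear form in them, so $H(\mathbf{X}_W\mid\mathbf{Y}^{[S,C]},S,C)=L$; since, given $(W,S,C,\mathbf{Y}^{[S,C]})$, the query is independent of $\mathbf{X}_{[K]}$, and $X_W$ is recoverable from the answers, queries, and side information, we get $D\ge H(\mathbf{A}^{[W,S,C]})\ge H(\mathbf{X}_W\mid\mathbf{Q}^{[W,S,C]},\mathbf{Y}^{[S,C]})=L$, i.e.\ the rate is at most $1$. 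For achievability with $M=2$, write $S=\{W,j\}$ and let the user request $X_T^{(n)}$ from each server $n$, where $T=W$ with probability $1/K$ and $T=j$ with probability $1-1/K$; since the non-demand element of $S$ is uniform over $[K]\setminus\{W\}$, one checks in one line that $\mathbf{T}$ is uniform over $[K]$ and independent of $\mathbf{W}$, so $W$-privacy holds, $X_W$ is obtained either directly or as $c_W^{-1}(Y-c_jX_j)$, and the download is $L$. For $M=K$, let the user request $\sum_{i\ne W}c_iX_i^{(n)}+c_W^{(n)}X_W^{(n)}$ from server $n$, where $c_W^{(n)}$ is drawn uniformly from $\mathbb{F}_q^{\times}\setminus\{c_W\}$; because the side-information coefficient of $X_W$ is itself uniform over $\mathbb{F}_q^{\times}$, the query coefficient vector is uniform over $(\mathbb{F}_q^{\times})^K$ regardless of $W$, so $W$-privacy holds, $X_W^{(n)}=(c_W-c_W^{(n)})^{-1}(Y^{(n)}-A_n)$, and again the download is $L$. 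Thus the rate $1$ is achieved, matching the converse.

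\emph{Regime $3\le M\le K-1$, achievability.} I would combine the single-server PIR-CSI scheme with the Sun--Jafar two-message PIR scheme, reducing the problem to an effective two-message retrieval whose virtual messages are $X_W$ and $\sum_{i\in S\setminus\{W\}}c_iX_i$ (their scaled sum being $Y$, so that recovering $X_W$ is equivalent to recovering the second virtual message). The main obstacle is privacy at the level of the real messages: these virtual messages involve only the $M<K$ messages indexed by $S$, so a naive two-message scheme would leak $S$ and thereby narrow down $W$. One therefore has to additionally randomize over the $K-M\ge 1$ messages outside $S$, padding each server's query into a linear combination supported on all $K$ messages whose coefficient distribution does not depend on $W$, while keeping the coefficient on the $X_W$-component tied to $Y$ so that decoding still goes through. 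Working out how this padding interacts with the Sun--Jafar layer so that the total download is exactly $(1+1/N)L$, and verifying $W$-privacy, are the bulk of the work.

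\emph{Regime $3\le M\le K-1$, converse.} I would prove $D\ge(1+1/N)L$ by a direct information-theoretic argument adapting the Sun--Jafar induction, in which $1+1/N=\sum_{i=0}^{1}N^{-i}$ reflects that the problem is effectively two-message PIR. Fixing an instance with demand $w_1\in S$, recoverability yields, after the standard manipulations, $D\ge H(\mathbf{A}^{[w_1]}\mid\mathbf{Q},\mathbf{Y})=L+H(\mathbf{A}^{[w_1]}\mid\mathbf{Q},\mathbf{Y},\mathbf{X}_{w_1})$, and the crux is to lower-bound the residual term by $L/N$. This is precisely where both ends of the range of $M$ enter: if $M=2$ the residual can vanish because, after conditioning on $Y$ and $X_{w_1}$, the rest of the side information is a single message; and if $M=K$ it can be negligible because the side information spans all $K$ messages, so a query involving all of them leaks nothing. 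For $3\le M\le K-1$ there is a second index $w_2\in S$ whose message retains full entropy given $(Y,X_{w_1})$ (because $M-1\ge 2$), as well as a message index outside $S$ (because $M\le K-1$); combining these with $W$-privacy to compare the demand-$w_1$ scenario against these alternative scenarios forces the residual term to be at least $L/N$, whence $D\ge(1+1/N)L$ and the rate is at most $N/(N+1)$. Making this privacy-based comparison go through despite the side information differing from one scenario to the next is the principal difficulty, and is exactly where the new information-theoretic arguments are needed.
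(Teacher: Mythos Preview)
Your treatment of $M\in\{2,K\}$ is correct and essentially matches the paper (which queries one randomly chosen server rather than splitting across all $N$, but with the same effect). For $3\le M\le K-1$ your plan has the right shape but is missing the key mechanisms in both directions.

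On achievability, the paper does not ``pad'' Sun--Jafar queries after the fact; rather, it applies the single-server MRP scheme \emph{first} to build two super-messages $\hat X_1,\hat X_2$ from carefully randomized index and coefficient sets (neither super-message is $X_W$ by itself, and the construction differs in the sub-regimes $M\le K/2+1$ and $M\ge K/2$), so that the pair $(\hat X_1,\hat X_2)$ already hides $W$; Sun--Jafar is then run unmodified on this pair. On the converse, two points are missing from your sketch. First, $W$-privacy alone yields a separate alternative $(S_n',C_n')$ for each server $n$; to obtain a \emph{single} alternative that works for all $n$ simultaneously---which you need in order to average the $N$ per-server bounds into the $1/N$ factor---you must also invoke the server-symmetry hypothesis (the paper isolates this as a lemma). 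Second, after switching from $(W,S,C)$ to an alternative $(W_1,S_1,C_1)$ with $W_1\in S\setminus\{W\}$, the conditioning acquires the \emph{new} side information $Y_1=Y^{[S_1,C_1]}$, over which you have no control; the relevant question is therefore not whether $X_{W_1}$ has full entropy given $(Y,X_W)$ but whether $\mathbf X_{W_1}$ is independent of $(\mathbf X_W,\mathbf Y,\mathbf Y_1)$. When it is, the bound follows immediately; when it is not, the paper shows $Y_1$ is forced into the span of $X_W$, $X_{W_1}$, and $Z:=\sum_{i\in S\setminus\{W,W_1\}}c_iX_i$, and one must perform a second switch to some $W_2\notin S$ (this is where $M\le K-1$ is used) together with a further sub-case split. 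The hypothesis $M\ge3$ enters precisely because it forces $Z\neq 0$, which is what guarantees the needed independence of $\mathbf X_{W_2}$ from $(\mathbf X_{W_1},\mathbf Y_1,\mathbf Y_2)$ in the terminal sub-case. Your sketch collapses this case structure and does not address the dependent branch.
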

 
This result shows that for the two corner cases of $M=2$ and $M=K$, the cost of retrieving one message privately is no more than that of downloading the message directly. For the cases of $3 \leq M \leq K-1$, full privacy can be achieved for only an additional download cost of $L/N$.  
\section{The ~PIR-CSI--I ~Problem}\label{sec:PIR-CSI-I}
In this section, we give the proof of converse and the achievability proof of Theorem~\ref{thm:PIRCSI-I}. %First, we provide the converse proof. Then, we propose an achievability scheme which is based on a combination of the Randomized Partitioning (RP) scheme which we proposed in~\cite{heidarzadeh2018capacity} for the single-server PIR-CSI problem and the Sun-Jafar scheme in~\cite{sun2017capacity} for the multi-server PIR problem.

%the capacity achieving Randomized Partitioning (RP) protocol of \cite{heidarzadeh2018capacity} for single-server PIR with coded side information and Sun-Jafar protocol of \cite{sun2017capacity} for multi-server PIR without side information. 

The following lemma renders a necessary condition for any server-symmetric PIR-CSI--I (or PIR-CSI--II) protocol to satisfy the $W$-privacy condition. 

\begin{lemma}\label{privacy & Recoverability} 
%Under Model~I (or Model~II), 
Any server-symmetric \mbox{PIR-CSI--I} (or PIR-CSI--II) protocol satisfies the following condition: for any ${W,W'\in [K]}, {S\in \mathcal{S}}, {C \in \mathcal{C}}$ with ${W\not\in S}$ (or ${W\in S}$), there exist ${S'\in \mathcal{S}}, {C'\in \mathcal{C}}$ with ${W'\not\in S'}$ (or ${W'\in S'}$), such that \[(\mathbf{Q}_n^{[W,S,C]},\hspace{-0.05cm}\mathbf{A}_n^{[W,S,C]},\mathbf{X}_{[K]})\hspace{-0.1cm}\sim\hspace{-0.1cm} 
(\mathbf{Q}_n^{[W',S',C']},\hspace{-0.05cm}\mathbf{A}_n^{[W',S',C']},\mathbf{X}_{[K]})\] holds for all $n\in [N]$. %where the notation $A \sim B$ means that $A$ and $B$ are identically distributed. 
\end{lemma}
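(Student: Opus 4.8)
The plan is to leverage the $W$-privacy condition $I(\mathbf{W};\mathbf{Q}_n^{[\mathbf{W},\mathbf{S},\mathbf{C}]},\mathbf{A}_n^{[\mathbf{W},\mathbf{S},\mathbf{C}]},\mathbf{X}_{[K]})=0$ directly, unwound at the level of realizations. Fix $n\in[N]$ and fix $W,W'\in[K]$, $S\in\mathcal{S}$, $C\in\mathcal{C}$ with $W\notin S$ (the Model~II case is identical with $W\in S$ throughout). Since $\mathbf{W}$ and $(\mathbf{S},\mathbf{C})$ are such that every pair $(W,S)$ with the correct incidence relation occurs with positive probability (Model~I puts positive mass on all $(W,S)$ with $W\notin S$, and the queries are generated from $(W,S,C,Y^{[S,C]})$), the event $\{\mathbf{W}=W,\mathbf{S}=S,\mathbf{C}=C\}$ has positive probability, so the conditional distribution of $(\mathbf{Q}_n,\mathbf{A}_n,\mathbf{X}_{[K]})$ given this event is well-defined; call it $P_{W,S,C}$. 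The zero-mutual-information condition says that the conditional law of $(\mathbf{Q}_n,\mathbf{A}_n,\mathbf{X}_{[K]})$ given $\mathbf{W}=W$ equals its unconditional law, i.e.\ $P_{\mathbf{Q}_n,\mathbf{A}_n,\mathbf{X}_{[K]}\mid \mathbf{W}=W}$ does not depend on $W$.

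The key step is then to expand $P_{\mathbf{Q}_n,\mathbf{A}_n,\mathbf{X}_{[K]}\mid \mathbf{W}=W}$ as a mixture over the side information realizations consistent with $W$:
\[
P_{\mathbf{Q}_n,\mathbf{A}_n,\mathbf{X}_{[K]}\mid \mathbf{W}=W}
=\sum_{S'\in\mathcal{S},\,C'\in\mathcal{C}\,:\,W'\notin S'} p_{\mathbf{S},\mathbf{C}\mid\mathbf{W}}(S',C'\mid W')\, P_{W',S',C'},
\]
and likewise for $W'$ in place of $W$ (replacing $W'\notin S'$ by the analogous condition; in Model~I the summation over $S'$ with $W'\notin S'$ is exactly the set on which the conditional of $\mathbf{S}$ given $\mathbf{W}=W'$ is supported). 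Because the left-hand sides agree for $W$ and $W'$ by $W$-privacy, the two mixtures are equal as distributions. The statement we want is weaker than equality of the mixtures: we only need that \emph{some} component $P_{W',S',C'}$ appearing in the $W'$-mixture equals $P_{W,S,C}$. This we get by observing that $P_{W,S,C}$ is one of the components of the $W$-mixture (with positive weight), hence it is a component of the equal mixture on the $W'$ side; but a distribution that appears with positive weight in a finite mixture and also equals that mixture's value on... — more carefully, the cleanest route is: the support of the $W$-mixture, viewed as a set of conditional distributions, must be "reachable"; I would instead argue that since both mixtures are the same measure $\mu$ on the (finite, since $\mathbb{F}_q$ and the query/answer alphabets can be taken finite for each fixed $m$) product space, and $P_{W,S,C}\ll\mu$, we need a combinatorial argument to pick out an individual matching component.

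Here is the obstacle and how I would resolve it: a mixture identity does \emph{not} in general imply that a single summand on one side equals a single summand on the other. The standard fix in this literature — and what I expect the authors do — is to strengthen the setup using \emph{server-symmetry} plus a decomposition of the query space into "types." Concretely, one shows the query $\mathbf{Q}_n^{[W,S,C]}$ (or the pair query/answer) takes values in a set that partitions according to an underlying "virtual demand/side-information structure," and within each type the conditional distribution is determined; $W$-privacy then forces that for the specific realized query-type produced by $(W,S,C)$ there is a realization $(W',S',C')$ producing the identically-distributed query-type. I would therefore: (1) invoke server-symmetry to reduce to a single fixed $n$; (2) use $W$-privacy to get the mixture identity above; (3) pick the particular realization on the right-hand side by a pigeonhole/support argument — namely, since $\mathrm{Pr}[\mathbf{W}=W,\mathbf{S}=S,\mathbf{C}=C]>0$, the joint realization $(\mathbf{Q}_n,\mathbf{A}_n,\mathbf{X}_{[K]})$ it induces occurs with positive probability under $\mu$, hence also under the $W'$-conditioned law, so it is induced by \emph{some} $(W',S',C')$ with $W'$ in the correct incidence position relative to $S'$, and for that triple $P_{W',S',C'}$ shares support with $P_{W,S,C}$; finally use that the queries are functions of $(W,S,C,Y)$ only through a canonical form to upgrade "shares support" to "equal in distribution." The main work — and the step most likely to require an extra structural assumption or a more delicate argument in the actual proof — is this last upgrade from an equality of mixtures to an equality of individual components.
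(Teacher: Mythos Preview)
Your attempt is more rigorous than the paper's own proof, and you have correctly isolated the real difficulty. The paper's argument is a two-paragraph contradiction sketch: it asserts that if, for some server $n$ and some candidate $W'$, no $(S_n,C_n)$ with the correct incidence makes the triples identically distributed, ``then the privacy condition is violated''; it then invokes server-symmetry to replace the possibly $n$-dependent witnesses $(S_n,C_n)$ by a single $(S',C')$ valid for all $n$. Neither of these two steps is justified beyond a single sentence.

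The gap you flag --- that $W$-privacy as defined is a mixture-equality statement, and equality of two finite mixtures does not in general force any component $P_{W,S,C}$ on one side to coincide with some component $P_{W',S',C'}$ on the other --- is genuine, and the paper does not address it; the paper effectively treats per-realization indistinguishability as the definition of privacy rather than as something to be derived from $I(\mathbf{W};\mathbf{Q}_n,\mathbf{A}_n,\mathbf{X}_{[K]})=0$. Your proposed support/pigeonhole route gets you only to shared support, not to equality of distributions, and the ``canonical form'' upgrade would need structure not present in the problem statement. So you have not closed the gap, but neither has the paper.

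One organizational point worth keeping: your use of server-symmetry is cleaner than the paper's. You reduce to a single server up front (if $(S',C')$ works for $n=1$, apply server-symmetry separately to $(W,S,C)$ and to $(W',S',C')$ to transport the distributional equality to every $n$); the paper instead first obtains per-server witnesses $(S_n,C_n)$ and then argues that symmetry forces them to agree, which is less direct and is itself only asserted rather than proved.
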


\begin{proof}
The proof is by the way of contradiction, and based on the definitions of \mbox{$W$-privacy} and server-symmetry. To protect the user's privacy, for different demands, the strategies (queries and answers) must be indistinguishable (identically distributed) from the perspective of each server. In particular, for each $n\in [N]$, it must hold that for any ${W\in [K]},{S\in \mathcal{S}}, {C\in \mathcal{C}}$ with ${W\not\in S}$ (or ${W\in S}$), and any candidate demand ${W' \in [K]}$, there exist ${S_n\in \mathcal{S},C_n\in \mathcal{C}}$ with ${W'\not\in S_n}$ (or ${W'\in S_n}$) that satisfy the condition of the lemma for the server $n$. 
%\[(\mathbf{Q}_n^{[W,S,C]},\hspace{-0.05cm}\mathbf{A}_n^{[W,S,C]},\mathbf{X}_{[K]})\hspace{-0.1cm}\sim\hspace{-0.1cm} 
%(\mathbf{Q}_n^{[W',S_n,C_n]},\hspace{-0.05cm}\mathbf{A}_n^{[W',S_n,C_n]},\mathbf{X}_{[K]})\]
Otherwise, if there do not exist such ${S_n\in \mathcal{S},C_n\in \mathcal{C}}$ that satisfy the condition of the lemma for some server $n$, then the privacy condition is violated. Moreover, by the server-symmetry assumption, for any candidate demand ${W' \in [K]}$, there must exist ${S'\in \mathcal{S},C'\in \mathcal{C}}$ (independent of $n$) with ${W'\not\in S'}$ (or $W'\in S'$) that make the strategies indistinguishable from the perspective of each server $n\in [N]$. That is, there must exist $S'\in \mathcal{S},C'\in \mathcal{C}$ with $W'\not\in S'$ (or $W'\in S'$) that satisfy the condition of the lemma for all servers. Otherwise, the server-symmetry assumption is violated. 
\end{proof}

% ${S_n=S',}$ $C_n=C'$ for all 

\subsection{Converse Proof for Theorem~\ref{thm:PIRCSI-I}}
Suppose that the user wishes to retrieve $X_W$ for a given $W \in [K]$, and has a side information ${Y \triangleq Y^{[S,C]}}$ for given  $S\in \mathcal{S},C\in \mathcal{C}$ such that $W \not \in S$. The user sends to the $n$th server a query $Q_n^{[W,S,C]}$, and the $n$th server responds to the user with an answer $A_n^{[W,S,C]}$. We need to show that the maximum total entropy of the answers from all servers (over all $W,S,C$), denoted by $D$, is lower bounded by ${(1+{1}/{N}+\dots+{1}/{N^{\lceil \frac{K}{M+1}\rceil-1}})L}$. 

The proof proceeds as follows: 
\begin{align}
D &\geq H(\mathbf{A}^{[W,S,C]}|\mathbf{Q}^{[W,S,C]},\mathbf{Y})\nonumber \\ &=H(\mathbf{A}^{[W,S,C]},\mathbf{X}_W|\mathbf{Q}^{[W,S,C]},\mathbf{Y})\label{eq:line1}\\
&=L+H(\mathbf{A}^{[W,S,C]}|\mathbf{Q}^{[W,S,C]},\mathbf{X}_W,\mathbf{Y})\label{eq:line2}\\
&\geq L+H({\mathbf{A}_1^{[W,S,C]}}|{\mathbf{Q}^{[W,S,C]}},\mathbf{X}_W,\mathbf{Y})\nonumber\\
&= L+H({\mathbf{A}_1^{[W,S,C]}}|{\mathbf{Q}_1^{[W,S,C]}},\mathbf{X}_W,\mathbf{Y})\label{eq:line3}
\end{align}
where~\eqref{eq:line1} follows from ${H(\mathbf{X}_W|{\mathbf{A}^{[W,S,C]}},{\mathbf{Q}^{[W,S,C]}},\mathbf{Y})=0}$ (by the recoverability condition);~\eqref{eq:line2} holds since $\mathbf{X}_W$ is independent of $({\mathbf{Q}^{[W,S,C]}},\mathbf{Y})$, and $H(\mathbf{X}_W|{\mathbf{Q}^{[W,S,C]}},\mathbf{Y})=H(\mathbf{X}_W)=L$; and~\eqref{eq:line3} holds because $\mathbf{A}_1^{[W,S,C]}$ only depends on ${({\mathbf{Q}_1^{[W,S,C]}},\mathbf{X}_{[K]})}$, and is conditionally independent of ${\mathbf{Q}_{n}^{[W,S,C]}}$ for all $n\neq 1$, given $({\mathbf{Q}_1^{[W,S,C]}},\mathbf{X}_W,\mathbf{Y})$. 

We will consider the following two cases separately: (i) ${\lceil K/(M+1) \rceil =1}$ (i.e., $K=M+1$), and (ii) ${\lceil K/(M+1) \rceil >1}$ (i.e., $K>M+1$). In the case (i), we need to show that $D$ is lower bounded by $L$. Since $H({\mathbf{A}_1^{[W,S,C]}}|{\mathbf{Q}_1^{[W,S,C]}},\mathbf{X}_W,\mathbf{Y})\label{eq:line4} \geq 0$, then $D \geq L$ (by~\eqref{eq:line3}). 

In the case (ii), in order to continue lower bounding~\eqref{eq:line3}, we arbitrarily choose a message, say $X_{W_1}$, such that $W_1 \not\in W \cup S$. (Note that such $W_1$ exists because $|W\cup S|=M+1<K$.) Based on Lemma~\ref{privacy & Recoverability}, there exist
$S_1\in \mathcal{S}$, $C_1 \in \mathcal{C}$ with $W_1 \not \in S_1$, and accordingly ${Y_1 \triangleq Y^{[S_1,C_1]}}$, such that \[H(\mathbf{A}_1^{[W,S,C]}|\mathbf{Q}_1^{[W,S,C]},\mathbf{X}_W,\mathbf{Y}) =H(\mathbf{A}_1^{[W_1,S_1,C_1]}|\mathbf{Q}_1^{[W_1,S_1,C_1]},\mathbf{X}_W,\mathbf{Y}).\] Then, we can write 
\begin{align}
D &\geq L+H(\mathbf{A}_1^{[W,S,C]}|\mathbf{Q}_1^{[W,S,C]},\mathbf{X}_W,\mathbf{Y})\nonumber\\ 
&= L+H({\mathbf{A}_1^{[W_1,S_1,C_1]}}|{\mathbf{Q}_1^{[W_1,S_1,C_1]}},\mathbf{X}_W,\mathbf{Y})\nonumber\\
&\geq L+H({\mathbf{A}_1^{[W_1,S_1,C_1]}}|{\mathbf{Q}^{[W_1,S_1,C_1]}},\mathbf{X}_W,\mathbf{Y}).\nonumber
\end{align} 
Similarly, by the server-symmetry assumption we have
\begin{align}
D &\geq L+H({\mathbf{A}_n^{[W_1,S_1,C_1]}}|{\mathbf{Q}^{[W_1,S_1,C_1]}},\mathbf{X}_W,\mathbf{Y})\nonumber
\end{align} for all $n \in [N]$. Combining all of these inequalities, we get 
\begin{align}
D &\geq L+\frac{1}{N}\sum_{n=1}^{N} H({\mathbf{A}_n^{[W_1,S_1,C_1]}}|{\mathbf{Q}^{[W_1,S_1,C_1]}},\mathbf{X}_W,\mathbf{Y})\nonumber \\
& \geq L+\frac{1}{N}H({\mathbf{A}^{[W_1,S_1,C_1]}}|{\mathbf{Q}^{[W_1,S_1,C_1]}},\mathbf{X}_W,\mathbf{Y}).\label{eq:line4new}
\end{align} %or equivalently, 
\begin{comment}
\begin{equation}\label{eq:line4new}
D \geq L+\frac{1}{N}H({\mathbf{A}^{[W_1,S_1,C_1]}}|{\mathbf{Q}^{[W_1,S_1,C_1]}},\mathbf{X}_W,\mathbf{Y}).    
\end{equation}
\end{comment}

% ~\eqref{eq:line11} holds since conditioning does not increase the entropy, and
% $H({\mathbf{A}^{[W_1,S_1,C_1]}}|{\mathbf{Q}^{[W_1,S_1,C_1]}},\mathbf{X}_W,\mathbf{Y})$

To further lower bound~\eqref{eq:line4new}, we can write
\begin{align}
&\hspace{-0.25cm}H({\mathbf{A}^{[W_1,S_1,C_1]}}|{\mathbf{Q}^{[W_1,S_1,C_1]}},\mathbf{X}_W,\mathbf{Y})\nonumber\\
&\hspace{-0.25cm}\geq H({\mathbf{A}^{[W_1,S_1,C_1]}}|{\mathbf{Q}^{[W_1,S_1,C_1]}},\mathbf{X}_W,\mathbf{Y},\mathbf{Y}_1)\nonumber\\ 
&\hspace{-0.25cm}= H({\mathbf{A}^{[W_1,S_1,C_1]}},\mathbf{X}_{W_1}|{\mathbf{Q}^{[W_1,S_1,C_1]}},\mathbf{X}_W,\mathbf{Y},\mathbf{Y}_1)\label{eq:line12}\\
&\hspace{-0.25cm}=L+H({\mathbf{A}^{[W_1,S_1,C_1]}}|{\mathbf{Q}^{[W_1,S_1,C_1]}},\mathbf{X}_W,\mathbf{Y},\mathbf{X}_{W_1},\mathbf{Y}_1)\label{eq:line13}\\
&\hspace{-0.25cm}\geq L+H({\mathbf{A}_1^{[W_1,S_1,C_1]}}|{\mathbf{Q}^{[W_1,S_1,C_1]}},\mathbf{X}_W,\mathbf{Y},\mathbf{X}_{W_1},\mathbf{Y}_1)\nonumber\\
&\hspace{-0.25cm}= L+H({\mathbf{A}_1^{[W_1,S_1,C_1]}}|{\mathbf{Q}_1^{[W_1,S_1,C_1]}},\mathbf{X}_W,\mathbf{Y},\mathbf{X}_{W_1},\mathbf{Y}_1)\label{eq:line7new}
\end{align} where~\eqref{eq:line12} holds since $X_{W_1}$ is recoverable from ${A^{[W_1,S_1 ,C_1]},Q^{[W_1,S_1 ,C_1]},Y_1,W_1,S_1,C_1}$; and~\eqref{eq:line13} holds because $\mathbf{X}_{W_1}$ is independent of $({\mathbf{Q}^{[W_1,S_1,C_1]}},\mathbf{X}_W,\mathbf{Y},\mathbf{Y}_1)$. 

We consider two cases separately: (ii.1) $\lceil K/(M+1)\rceil = 2$, and (ii.2) $\lceil K/(M+1)\rceil > 2$. In the case (ii.1), from~\eqref{eq:line4new} and~\eqref{eq:line7new} it follows that $D\geq L+L/N$.%, as was to be shown. 

In the case (ii.2), to continue lower bounding~\eqref{eq:line7new}, we pick a message, say ${X_{W_2}}$, such that $W_2 \not\in W\cup S \cup W_1 \cup S_1$. (Note that such $W_2$ exists since $|W\cup S\cup W_1\cup S_1|\leq 2(M+1)<K$.) According to Lemma \ref{privacy & Recoverability}, there exist
$S_2\in \mathcal{S}$, $C_2 \in \mathcal{C}$ with ${W_2 \not \in S_2}$, and accordingly, ${Y_2 = Y^{[S_2,C_2]}}$, such that \[H(\mathbf{A}_1^{[W_1,S_1,C_1]}|\mathbf{Q}_1^{[W_1,S_1,C_1]},\mathbf{X}_W,\mathbf{Y},\mathbf{X}_{W_1},\mathbf{Y}_1) = H(\mathbf{A}_1^{[W_2,S_2,C_2]}|\mathbf{Q}_1^{[W_2,S_2,C_2]},\mathbf{X}_W,\mathbf{Y},\mathbf{X}_{W_1},\mathbf{Y}_1).\] Thus,
\begin{align}
&H({\mathbf{A}^{[W_1,S_1,C_1]}}|{\mathbf{Q}^{[W_1,S_1,C_1]}},\mathbf{X}_W,\mathbf{Y})\nonumber\\
& \geq L+H({\mathbf{A}_1^{[W_1,S_1,C_1]}}|{\mathbf{Q}_1^{[W_1,S_1,C_1]}},\mathbf{X}_W,\mathbf{Y},\mathbf{X}_{W_1},\mathbf{Y}_1)\nonumber\\
& = L+H({\mathbf{A}_1^{[W_2,S_2,C_2]}}|{\mathbf{Q}_1^{[W_2,S_2,C_2]}},\mathbf{X}_W,\mathbf{Y},\mathbf{X}_{W_1},\mathbf{Y}_1)\nonumber\\
&\geq L+H({\mathbf{A}_1^{[W_2,S_2,C_2]}}|{\mathbf{Q}^{[W_2,S_2,C_2]}},\mathbf{X}_W,\mathbf{Y},\mathbf{X}_{W_1},\mathbf{Y}_1).\nonumber
\end{align} Similarly, by the server-symmetry assumption, we have
\begin{align}
&H({\mathbf{A}^{[W_1,S_1,C_1]}}|{\mathbf{Q}^{[W_1,S_1,C_1]}},\mathbf{X}_W,\mathbf{Y}) \nonumber\\
& \geq L+H({\mathbf{A}_n^{[W_2,S_2,C_2]}}|{\mathbf{Q}^{[W_2,S_2,C_2]}},\mathbf{X}_W,\mathbf{Y},\mathbf{X}_{W_1},\mathbf{Y}_1)\nonumber
\end{align} for all $n \in [N]$. Combining all of these inequalities, we get
\begin{align}
&H({\mathbf{A}^{[W_1,S_1,C_1]}}|{\mathbf{Q}^{[W_1,S_1,C_1]}},\mathbf{X}_W,\mathbf{Y})  \nonumber\\
& \quad \geq L+ \frac{1}{N}H({\mathbf{A}^{[W_2,S_2,C_2]}}|{\mathbf{Q}^{[W_2,S_2,C_2]}},\mathbf{X}_W,\mathbf{Y},\mathbf{X}_{W_1},\mathbf{Y}_1).\label{eq:line8new}
\end{align} Putting~\eqref{eq:line4new} and~\eqref{eq:line8new} together, we get

\begin{dmath*}
D \geq L+\frac{L}{N} + \frac{1}{N^2}H({\mathbf{A}^{[W_2,S_2,C_2]}}|{\mathbf{Q}^{[W_2,S_2,C_2]}},\mathbf{X}_W,\mathbf{Y},\mathbf{X}_{W_1},\mathbf{Y}_1).
\end{dmath*} 

%$D \geq L+{L}/{N} + ({1}/{N^2}) H({\mathbf{A}^{[W_2,S_2,C_2]}}|{\mathbf{Q}^{[W_2,S_2,C_2]}},\mathbf{X}_W,\mathbf{Y},\mathbf{X}_{W_1},\mathbf{Y}_1)$. 
%It is easy to see that we can recursively choose $\lceil K/(M+1)\rceil-2$ messages $X_{W_i}$ (for $i\in \{3,\dots,\lceil K/(M+1)\rceil\}$), one at a time, and continue the same technique for lower bounding  
%Consider a sequence of messages $X_{W_3},\dots,X_{W_l}$ such that $W_i\not\in W\cup S\cup W_1\cup S_1\cup \dots \cup W_{i-1}\cup S_{i-1}$, where for any $3\leq j\leq i-1$, $S_j\in \mathcal{S}, C_j\in \mathcal{C}$ are such that  $H(A^{[W_{j-1},S_{j-1},C_{j-1}]}|Q^{[W_{j-1},S_{j-1},C_{j-1}]},X_W,Y,\{X_{W_k}\}_{k\in [j-1]},\{Y^{[S_k,C_k]}\}_{k\in [j-1]}) = $ 
By recursively choosing the messages $X_{W_i}$ (similarly as $X_{W_1}$ and $X_{W_2}$) for $i\in \{3,\dots,{\lceil K/(M+1)\rceil}\}$ and using the same lower bounding technique, it can be shown that
%Let ${H({A^{[W_i,S_i,C_i]}}|{Q^{[W_i,S_i,C_i]}},X_W,Y,\dots,X_{W_{i-1}},Y_{i-1})}$ be denoted by $D^{S_i}(K,i(M+1))$. Then, we have % -- not presented here to avoid repetition,
\begin{comment}
\begin{align}
{D^{S_1}(K,M+1)}\geq L+\frac{1}{N}{D^{S_2}(K,2(M+1))}.\nonumber
\end{align} By choosing $X_{W_j}$ for $j \in \{3,...,J\}$ where $J\triangleq\lceil\frac{K}{M+1}\rceil$, we can continue the proof in a recursive fashion as follows:
\begin{align}
D &\geq L+\frac{1}{N}D^{S_1}(K,M+1)\nonumber\\
&\geq L+\frac{1}{N}(L+\frac{1}{N}D^{S_2}(K,2(M+1)))\nonumber\\
&\geq \cdots\nonumber \\
&\geq L+\frac{1}{N}(L+\dots+\frac{1}{N}(L+\frac{1}{N}D^{S_{J}}(K,K))).\nonumber
\end{align} Noting that $D^{S_{J}}(K,K) \geq 0$, we get
\end{comment}
\begin{align}
D &\geq  L+{L}/{N}+\dots+{L}/{N^{\lceil \frac{K}{M+1}\rceil-1}}.\nonumber
\end{align} %This completes the proof. 

% for arbitrary $N$, $K\geq 1$, and ${0\leq M\leq K-1}$, 

\subsection{Achievability Proof for Theorem~\ref{thm:PIRCSI-I}}\label{sec:PIRCSIIpro}
In this section, we propose a server-symmetric PIR-CSI--I protocol that achieves a rate equal to ${\mathrm{C}_{W-\text{\it I}}}$. The proposed protocol employs the \emph{Randomized Partitioning (RP)} scheme which we proposed in~\cite{heidarzadeh2018capacity} for single-server PIR-CSI (under Model~I) as well as the Sun-Jafar scheme of~\cite{sun2017capacity} for multi-server PIR. 

We assume that each message consists of ${N^{\lceil K/(M+1) \rceil}}$ symbols over $\mathbb{F}_q$. %It should be noted that this assumption can be relaxed to each message being only $N-1$ $\mathbb{F}_q$-symbols long, by using the PIR scheme of Tian~\emph{et al.}~in~\cite{tian2018capacity}, instead of the Sun-Jafar PIR scheme, in the proposed protocol.
\vspace{0.25cm}  
%The  proposed scheme, referred to as the Multi-Server PIR-CSI--I, consists of the following four steps.\vspace{0.15cm}

\textbf{Multi-Server PIR-CSI--I Protocol:} 
\vspace{0.125cm} 

% where $I'_{i} = \{c_{i_1},\dots,c_{i_{M+1}}\}$ for $I_{i}=\{i_1,\dots,i_{M+1}\}$

\textit{\textbf{Step 1:}} The user utilizes the RP scheme of~\cite{heidarzadeh2018capacity} to construct $r\triangleq\lceil \frac{K}{M+1}\rceil$ sequences $I_1,\dots,I_r$ from indices in $[K]$, each of length $M+1$, and $r$ sequences $I'_1,\dots,I'_r$ with elements in $\mathbb{F}^{\times}_q$, each of length $M+1$. In particular, $I_1 = \{W,S\}$ and $I'_1 = \{c,C\}$ where $C$ is the sequence of coefficients in the user's side information $Y^{[S,C]}$, and $c$ is randomly chosen from $\mathbb{F}^{\times}_q$. (For more details, see~\cite[Section IV-B]{heidarzadeh2018capacity}.)

\textit{\textbf{Step 2:}} 
The user then creates $\tilde{I}_i$ and $\tilde{I}'_i$ for each $i\in [r]$ by reordering the elements of both $I_i$ and $I'_i$ with the same randomly picked permutation $\pi_i: [M+1]\rightarrow [M+1]$, and constructs $I^{*}_i = (\tilde{I}_i,\tilde{I}'_i)$. Then, the user sends $\{I^{*}_{\sigma(i)}\}_{i\in [r]}$ to all servers, %a uniform random permutation of $\{I^{*}_1,\dots,I^{*}_r\}$, say  
for a randomly chosen permutation ${\sigma: [r]\rightarrow [r]}$. Note that in the RP scheme, $\{I_i\}_{i\in [r]}$ and $\{I'_i\}_{i\in [r]}$ are designed in such a way that given $\{I^{*}_{\sigma(1)},\dots,I^{*}_{\sigma(r)}\}$, any index in $[K]$ is equally likely to be the user's demand index.

%, i.e., it sends $\{I^{*}_1,\dots,I^{*}_r\}$ in a random order.

\textit{\textbf{Step 3:}} Using $I^{*}_{\sigma(i)} = (\tilde{I}_{\sigma(i)},\tilde{I}'_{\sigma(i)})$ for all $i\in [r]$, the user and all the servers form $r$ \textit{super-messages} ${\hat{X}_1,\dots,\hat{X}_r}$ such that $\hat{X}_{i}=\sum_{j=1}^{M+1} c_{i_j} X_{i_j}$ for all $i\in [r]$, where ${\tilde{I}_{\sigma(i)} = \{i_1,\dots,i_{M+1}\}}$ and ${\tilde{I}'_{\sigma(i)} = \{c_{i_1},\dots,c_{i_{M+1}}\}}$. %It is easy to verify that $\hat{X}_{\sigma(1)} = cX_W+Y^{[S,C]}$. 

\textit{\textbf{Step 4:}} The user and the servers then utilize the Sun-Jafar protocol with $r$ super-messages $\hat{X}_1,\dots,\hat{X}_r$ in such a way that the user can privately download the super-message $\hat{X}_{\sigma^{-1}(1)}=cX_W+Y^{[S,C]}$; and subsequently, subtracting off $Y^{[S,C]}$ from $\hat{X}_{\sigma^{-1}(1)}$, the user recovers $X_W$. 

%\begin{comment}
%Next, we describe the Sun-Jafar protocol using a simple example. The details can be found in \cite{sun2017capacity}.

\begin{comment}
\begin{example}(Sun-Jafar Protocol \cite{sun2017capacity}) Suppose there are $N = 2$ servers, $K = 3$ messages, and $M = 0$, i.e., no side information. Without loss of generality, suppose the user is interested in retrieving $X_1$. The protocol assumes that each of the messages is $L = N^K = 8$ bits long. For a message $X_i$, let $[X_{i,1}, \dots,X_{i,L}]$ be a uniform
random permutation of its $t$ bits. The user chooses independent random
permutations of the $X_i$ bits for $i \in [3]$. Then, it requests the bits from
the first server ($S_1$) and the second server ($S_2$) as given below.\vspace{0.1cm}

\begin{center}
\begin{tabular}{ |c|c| } 
 \hline
 $S_1$ & $S_2$ \\  [0.5ex]
 \hline
 $X_{1,1},X_{2,1},X_{3,1}$ & $X_{1,2},X_{2,2},X_{3,2}$ \\ [0.5ex]
 $X_{1,3}+X_{2,2}$ & $X_{1,5}+X_{2,1}$ \\ [0.5ex]
 $X_{1,4}+X_{3,2}$ & $X_{1,6}+X_{3,1}$ \\ [0.5ex]
 $X_{2,3}+X_{3,3}$ & $X_{2,4}+X_{3,4}$ \\ [0.5ex]
 $X_{1,7}+X_{2,4}+X_{3,4}$ & $X_{1,8}+X_{2,3}+X_{3,3}$ \\ [0.5ex]
 \hline
\end{tabular}
\end{center}\vspace{0.1cm}

As can be seen from the queries, the Sun-Jafar Protocol satisfies both the recoverability and $W$-Privacy conditions. Because, the user can decode the eight bits of $X_1$ from the answers it gets, and each of the two servers cannot distinguish which message is requested by the user.
\end{example}
\end{comment}

%Next, we provide an example for the proposed protocol.

\textbf{Example 1.} Assume that there are ${N = 2}$ servers, ${K=9}$ messages from $\mathbb{F}_{3^8}$ (i.e., each message has $8$ symbols over $\mathbb{F}_3$), and $M=3$. Suppose that the user demands the message $X_1$ and has a side information $X_2+2X_3+X_4$. Note that for this example, $W = 1$, $S= \{2,3,4\}$, and $C=\{1,2,1\}$. 

First, the user labels $r=\lceil \frac{K}{M+1}\rceil=3$ sequences as $I_1,I_2,I_3$, each of length $M+1=4$. For creating these sequences, the user needs to have $12$ indices, but at the beginning the user has $9$ indices. For selecting the remaining $3$ required indices, following the RP scheme of~\cite{heidarzadeh2018capacity}, the user selects $w\in \{0,1\}$, $s\in \{0,1,2,3\}$, and $t\in\{0,1,\dots,5\}$ randomly chosen indices from $W=\{1\}$, $S = \{2,3,4\}$, and $T = \{5,6,7,8,9\}$, respectively, according to a carefully designed probability distribution (ensuring $W$-privacy of the RP scheme) on all $(w,s,t)$ such that $w+s+t= 3$. For this example, the probability distribution is given by \vspace{-0.125cm} 
\begin{align*}
p(w,s,t) \triangleq 
\begin{cases}
\frac{14}{171}, & w=0,s=3,t=0\\
\frac{60}{171}, & w=0,s=2,t=1\\
\frac{36}{171}, & w=0,s=1,t=2\\
\frac{4}{171}, & w=0,s=0,t=3\\
\frac{21}{171}, & w=1,s=2,t=0\\
\frac{30}{171}, & w=1,s=1,t=1\\
\frac{6}{171}, & w=1,s=0,t=2
\end{cases}  
\end{align*} \vspace{-0.125cm}

Suppose that the user chooses ${w=1},{s=1},{t=1}$, and selects the $3$ indices $\{1,2,5\}$. Following the RP protocol, the user forms the sequence $I_1=\{W,S\}=\{1,2,3,4\}$. In the remaining $8$ indices, there is one repetitive index, $5$. For forming the other two sequences, $I_2$ and $I_3$, the user places the repetitive index $5$ into both $I_2$ and $I_3$. Next, the user randomly partitions the remaining $6$ indices, $\{1,2,6,7,8,9\}$, into $I_2$ and $I_3$. For this example, suppose that ${I_2=\{5,1,7,8\}}$ and ${I_3=\{5,2,6,9\}}$. 

The user then labels ${r=3}$ sequences as $I'_1,I'_2,I'_3$, each of length $4$. For this example, suppose that the user creates ${I'_1=I'_2=I'_3=\{1,1,2,1\}}$. Then, the user randomly reorders the elements of $I_i$ and $I'_i$, and constructs\vspace{-0.1cm} 
\begin{align*}
\tilde{I}_1 = \{2,4,1,3\}, &\quad  \tilde{I}'_1 = \{1,1,1,2\}\\ 
\tilde{I}_2 = \{7,5,1,8\}, &\quad \tilde{I}'_2 = \{2,1,1,1\}\\
\tilde{I}_3 = \{2,9,6,5\}, &\quad \tilde{I}'_3 = \{1,1,2,1\}.
\end{align*}
%$\tilde{I}_1 = \{2,4,1,3\}$, $\tilde{I}'_1 = \{1,1,1,2\}$, $\tilde{I}_2 = \{7,5,1,8\}$, $\tilde{I}'_2 = \{2,1,1,1\}$, $\tilde{I}_3 = \{2,9,6,5\}$, and $\tilde{I}'_3 = \{1,1,2,1\}$. 

\vspace{-0.1cm}
Next, the user sends a uniform random permutation of ${\{I^{*}_1,I^{*}_2,I^{*}_3\}}$, say ${\{I^{*}_1,I^{*}_3,I^{*}_2\}}$, to both servers, where ${I^{*}_i = (\tilde{I}_i,\tilde{I}'_i)}$. 

The user and the servers then form three super-messages as follows:\vspace{-0.125cm} %$\hat{X}_1=X_2+X_4+X_1+2X_3$, $\hat{X}_2=X_2+X_9+2X_6+X_5$, and $\hat{X}_3=2X_7+X_5+X_1+X_8$.
\begin{align*}
\hat{X}_1 &=X_2+X_4+X_1+2X_3\\
\hat{X}_2 &=X_2+X_9+2X_6+X_5\\
\hat{X}_3 &=2X_7+X_5+X_1+X_8.
\end{align*}

\vspace{-0.125cm}
Finally, the user and the servers run the Sun-Jafar protocol as follows for the three super-messages $\hat{X}_1,\hat{X}_2,\hat{X}_3$ in such a way that the user can privately download $\hat{X}_1$. For each $\hat{X}_i$, let $[\hat{X}_{i,1},\dots,\hat{X}_{i,8}]$ be an independent and uniform random permutation of the $8$ symbols (over $\mathbb{F}_3$) of $\hat{X}_i$. The user requests $7$ symbols from the first server and $7$ symbols from the second server as listed in Table~\ref{table:1}~\cite{sun2017capacity}, where the requested symbols are carefully designed linear combinations of symbols $\{\hat{X}_{i,j}\}_{i\in [3],j\in [8]}$.  
%Each server then sends back to the user the values of the requested bits. 
From the servers' answers, the user first obtains the super-message $\hat{X}_1=X_2+X_4+X_1+2X_3$, and then recovers the desired message $X_1$ by subtracting off the side information $X_2+2X_3+X_4$. %Since the Sun-Jafar protocol guarantees that no server can obtain any information about the identity of the demanded super-messages, i.e., $\hat{X}_1$, the $W$-privacy property of the RP Protocol will be preserved. 
For this example, the proposed protocol requires to download a total of $14$ symbols (over $\mathbb{F}_3$), achieving the rate of $8/14=4/7$.

\begin{lemma}\label{lemma 2}
The Multi-Server PIR-CSI--I protocol is a server-symmetric protocol that satisfies the \mbox{recoverability} and the $W$-privacy conditions, and achieves the rate ${(1+{1}/{N}+\dots+{1}/{N^{\lceil \frac{K}{M+1}\rceil -1}})^{-1}}$.\vspace{-0.5cm}
\end{lemma}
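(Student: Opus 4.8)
The plan is to verify the three required properties of the Multi-Server PIR-CSI--I protocol separately: (i) server-symmetry, (ii) the recoverability condition, and (iii) the $W$-privacy condition; and finally to compute the rate. For the rate, observe that each message consists of $N^{r}$ symbols over $\mathbb{F}_q$ where $r = \lceil K/(M+1)\rceil$, so each super-message $\hat{X}_i$ also consists of $N^{r}$ symbols; by the download cost of the Sun-Jafar scheme of~\cite{sun2017capacity} applied to $r$ super-messages, the user downloads a total of $N^{r}(1 + 1/N + \dots + 1/N^{r-1})$ symbols over $\mathbb{F}_q$, and since $L = N^{r}\log_2 q$ bits correspond to one message, the resulting rate is exactly $(1 + 1/N + \dots + 1/N^{r-1})^{-1} = \mathrm{C}_{W-\text{\it I}}$.

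For recoverability: by Step~4, the user privately downloads the super-message $\hat{X}_{\sigma^{-1}(1)}$, which by construction (Steps 1 and 3, together with $I_1 = \{W,S\}$, $I'_1 = \{c,C\}$) equals $cX_W + Y^{[S,C]}$ for some nonzero $c$. Since the user knows $c$ and possesses $Y^{[S,C]}$, subtracting $Y^{[S,C]}$ and dividing by $c$ recovers $X_W$; this is exactly the recoverability guarantee of the RP scheme combined with the correctness of Sun-Jafar. For server-symmetry: all servers receive the identical permuted index-coefficient description $\{I^{*}_{\sigma(i)}\}_{i\in[r]}$, hence all servers form the identical tuple of super-messages $(\hat{X}_1,\dots,\hat{X}_r)$; the Sun-Jafar protocol of~\cite{sun2017capacity} is itself server-symmetric in the sense that $(\mathbf{Q}_n,\mathbf{A}_n,\hat{\mathbf{X}}_{[r]})$ has a distribution independent of $n$ (it treats the servers interchangeably via independent random bit-permutations of the super-messages), and since the super-messages are deterministic functions of the $\mathbf{X}_{[K]}$ and the shared query, this extends to $(\mathbf{Q}_n^{[W,S,C]},\mathbf{A}_n^{[W,S,C]},\mathbf{X}_{[K]})$ having a distribution independent of $n$.

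For $W$-privacy: I would argue in two stages. First, the RP scheme of~\cite{heidarzadeh2018capacity} guarantees that the description $\{I^{*}_{\sigma(i)}\}_{i\in[r]}$ sent to the servers is such that, conditioned on it, every index in $[K]$ is equally likely to be the demand index $W$ — this is precisely the property quoted at the end of Step~2, so I may invoke it. Second, conditioned on this shared description (and hence on the realized tuple of super-messages), the only additional information any server $n$ receives is its Sun-Jafar query $\mathbf{Q}_n$ and answer $\mathbf{A}_n$; but the Sun-Jafar scheme hides which super-message index $\sigma^{-1}(1)$ is being retrieved, i.e. $\mathbf{Q}_n$ is distributed independently of which super-message is the target. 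Combining: server $n$ learns nothing about $\sigma^{-1}(1)$ beyond the description, and the description leaves $W$ uniform over $[K]$, so $I(\mathbf{W};\mathbf{Q}_n^{[W,S,C]},\mathbf{A}_n^{[W,S,C]},\mathbf{X}_{[K]}) = 0$. Making this chain of conditional-independence claims fully rigorous — carefully tracking that $\mathbf{X}_{[K]}$ in the privacy expression does not leak anything because the super-messages and queries are constructed with fresh randomness independent of the message contents, and that the RP randomness $(w,s,t,\pi_i,\sigma)$ and the Sun-Jafar permutation randomness are mutually independent — is the main obstacle; the rate and recoverability parts are essentially bookkeeping.
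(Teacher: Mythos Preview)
Your proposal is correct and follows essentially the same approach as the paper's proof: both establish server-symmetry by appealing to the symmetry of the Sun-Jafar layer, recoverability directly from Step~4, $W$-privacy by composing the RP guarantee (every index equally likely given the description) with the Sun-Jafar guarantee (the requested super-message index is hidden), and the rate by reducing to the Sun-Jafar download cost on $r=\lceil K/(M+1)\rceil$ super-messages. The only small point the paper makes explicit that you leave implicit is that the super-messages $\hat{X}_1,\dots,\hat{X}_r$ are \emph{linearly independent} combinations of the i.i.d.\ uniform $X_i$'s and hence themselves i.i.d.\ uniform over $\mathbb{F}_{q^m}$ --- this is what justifies that the entropy of the Sun-Jafar answers matches the raw symbol count, so that the rate (defined as $L/H(\mathbf{A})$) really equals your symbol-ratio computation.
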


\begin{table}[t!]
\caption{The queries/answers of Sun-Jafar protocol for $2$ servers and $3$ messages $\hat{X}_1,\hat{X}_2,\hat{X_3}$, when the user demands $\hat{X}_{1}$\cite{sun2017capacity}.}\vspace{-0.125cm}
    \centering
    \scalebox{1.15}{
\begin{tabular}{ |c|c| } 
 \hline
 Server 1 & Server 2 \\
 \hline
 $\hat{X}_{1,1},\hat{X}_{2,1},\hat{X}_{3,1}$ & $\hat{X}_{1,2},\hat{X}_{2,2},\hat{X}_{3,2}$ \\ [0.5ex]
 $\hat{X}_{1,3}+\hat{X}_{2,2}$ & $\hat{X}_{1,5}+\hat{X}_{2,1}$ \\ [0.5ex]
 $\hat{X}_{1,4}+\hat{X}_{3,2}$ & $\hat{X}_{1,6}+\hat{X}_{3,1}$ \\ [0.5ex]
 $\hat{X}_{2,3}+\hat{X}_{3,3}$ & $\hat{X}_{2,4}+\hat{X}_{3,4}$ \\ [0.5ex]
 $\hat{X}_{1,7}+\hat{X}_{2,4}+\hat{X}_{3,4}$ & $\hat{X}_{1,8}+\hat{X}_{2,3}+\hat{X}_{3,3}$ \\ [0.5ex]
 \hline
\end{tabular}}\vspace{-0.35cm}
\label{table:1}
\end{table}

\begin{proof}

The Multi-Server PIR-CSI--I protocol is a server-symmetric protocol since as explained in Step~$4$ of this protocol, it builds upon the Sun-Jafar protocol that enforces symmetry across servers \cite{sun2017capacity}.

Since $\mathbf{X}_1,\dots,\mathbf{X}_K$ are uniformly and independently distributed over $\mathbb{F}_{q^m}$, and $\hat{X}_1,\dots,\hat{X}_r$ are linearly independent combinations of $X_1,\dots,X_K$ over $\mathbb{F}_q$, then $\hat{\mathbf{X}}_1,\dots,\hat{\mathbf{X}}_r$ are uniformly and independently distributed over $\mathbb{F}_{q^m}$. That is, ${H(\hat{\mathbf{X}}_i) = m\log_2 q=L}$ for all $i\in [r]$. Hence, the proposed protocol achieves the same rate as the Sun-Jafar protocol for $N$ servers and $\lceil {K}/{(M+1)}\rceil$ identically and independently distributed messages, i.e., the rate %. This implies the proposed protocol achieves the rate 
${(1+{1}/{N}+\dots+{1}/{N^{\lceil \frac{K}{M+1}\rceil-1}})^{-1}}$ (see~\cite[Theorem~1]{sun2017capacity}). 

From the step~$4$ of the proposed protocol, it can be easily confirmed that the recoverability condition is satisfied. The proof of $W$-privacy is as follows. By the design of the protocol, all servers are fully aware of how the super-messages $\hat{X}_1,\dots,\hat{X}_r$ have been formed. From the perspective of each server, according to the RP protocol, each super-message $\hat{X}_i$ has a certain probability to be the super-message needed by the user, i.e., the super-message from which the user can recover the demanded message. On the other hand, the Sun-Jafar protocol guarantees that given their query, no server can obtain any information about which super-message is being requested by the user. That is, given their query, from each server's perspective the probability of any super-message $\hat{X}_i$ to be the super-message needed by the user remains the same as that in the RP protocol. Moreover, the $W$-privacy of the RP protocol ensures that given their query, each server finds every message in $X_{[K]}$ equally likely to be the user's demand. This proves the $W$-privacy of the proposed protocol. 
%The privacy of the requested super-message, i.e., $\hat{X}_{\sigma(1)}$, is guaranteed because of the $W$-privacy of the Sun-Jafar protocol. That is, no server can obtain any information about the identity of the retrieved super-message. The privacy of the user's demand is also preserved because of the $W$-privacy of the RP protocol (see \cite[Lemma~2]{heidarzadeh2018capacity}).
\end{proof}

\section{The ~PIR-CSI--II ~Problem}\label{sec:PIR-CSI-II}

In this section, we give the proof of converse and the achievability proof of Theorem~\ref{thm:PIRCSI-II}. %First, we present the converse proof for each case of $M$. Then, we propose an achievability scheme which is a mix of the capacity achieving PIR-CSI--II protocol of \cite{heidarzadeh2018capacity} for single-server PIR-CSI and Sun-Jafar protocol of \cite{sun2017capacity} for multi-server PIR without side information.

\subsection{Converse Proof for Theorem~\ref{thm:PIRCSI-II}}
%In this section, for simplifying the notation we use regular symbols also to denote random variables. 
Suppose that the user wishes to retrieve $X_W$ for a given $W \in [K]$, and has a side information ${Y \triangleq Y^{[S,C]}}$ for given $S\in \mathcal{S},C\in \mathcal{C}$ such that $W \in S$. We need to show that the maximum total entropy of the answers from all servers (over all $W,S,C$), denoted by $D$, is lower bounded by $L$ when $M=2$ or $M=K$, and is lower bounded by $(1+1/N)L$ when $3\leq M\leq K-1$. 

The proof proceeds as follows:  
\begin{align}
D &\geq H({\mathbf{A}^{[W,S,C]}}|{\mathbf{Q}^{[W,S,C]}},\mathbf{Y})\nonumber \\ &=H({\mathbf{A}^{[W,S,C]}},\mathbf{X}_W|{\mathbf{Q}^{[W,S,C]}},\mathbf{Y})\label{eq:line22}\\
&=L+H({\mathbf{A}^{[W,S,C]}}|{\mathbf{Q}^{[W,S,C]}},\mathbf{X}_W,\mathbf{Y})\label{eq:line23}
\end{align}
where~\eqref{eq:line22} holds because of the recoverability condition, and~\eqref{eq:line23} holds because $\mathbf{X}_W$ is independent of $({\mathbf{Q}^{[W,S,C]}},\mathbf{Y})$. By the non-negativity of the entropy,~\eqref{eq:line23} yields $D \geq L$, which completes the proof for the cases of $M=2$ and $M=K$. For the cases of ${3\leq M\leq K-1}$, we continue lower bounding~\eqref{eq:line23} as follows:
\begin{align}
D &\geq L+H({\mathbf{A}^{[W,S,C]}}|{\mathbf{Q}^{[W,S,C]}},\mathbf{X}_W,\mathbf{Y})\nonumber\\
&\geq L+H({\mathbf{A}_1^{[W,S,C]}}|{\mathbf{Q}^{[W,S,C]}},\mathbf{X}_W,\mathbf{Y})\nonumber\\
&= L+H({\mathbf{A}_1^{[W,S,C]}}|{\mathbf{Q}_1^{[W,S,C]}},\mathbf{X}_W,\mathbf{Y})\label{eq:line29}
\end{align}
where~\eqref{eq:line29} holds because given $({\mathbf{Q}_1^{[W,S,C]}},\mathbf{X}_W,\mathbf{Y})$, ${\mathbf{A}_1^{[W,S,C]}}$ is conditionally independent of $\mathbf{Q}_n^{[W,S,C]}$ for all ${n\neq 1}$. In order to continue lower bounding~\eqref{eq:line29}, we choose an arbitrary message, say $X_{W_1}$, such that $W_1 \in S\setminus W$. According to Lemma~\ref{privacy & Recoverability}, there exist $S_1\in \mathcal{S}$, $C_1\in \mathcal{C}$ with $W_1\in S_1$, and accordingly $Y_1 \triangleq Y^{[S_1,C_1]}$, such that \[H(\mathbf{A}_1^{[W,S,C]}|\mathbf{Q}_1^{[W,S,C]},\mathbf{X}_W,\mathbf{Y})= H(\mathbf{A}_1^{[W_1,S_1,C_1]}|\mathbf{Q}_1^{[W_1,S_1,C_1]},\mathbf{X}_W,\mathbf{Y}).\]
\begin{comment}
\begin{align}
&H(\mathbf{A}_1^{[W,S,C]}|\mathbf{Q}_1^{[W,S,C]},\mathbf{X}_W,\mathbf{Y})\nonumber \\&= H(\mathbf{A}_1^{[W_1,S_1,C_1]}|\mathbf{Q}_1^{[W_1,S_1,C_1]},\mathbf{X}_W,\mathbf{Y})\nonumber  
\end{align}
\end{comment}
Rewriting~\eqref{eq:line29}, 
\begin{align}
D & \geq L+H({\mathbf{A}_1^{[W_1,S_1,C_1]}}|{\mathbf{Q}_1^{[W_1,S_1,C_1]}},\mathbf{X}_W,\mathbf{Y})\nonumber\\
&\geq L+H({\mathbf{A}_1^{[W_1,S_1,C_1]}}|{\mathbf{Q}^{[W_1,S_1,C_1]}},\mathbf{X}_W,\mathbf{Y}).\nonumber%\label{eq:line30}
\end{align} Similarly, by the server-symmetry assumption, we can write
\begin{align}
D &\geq L+H({\mathbf{A}_n^{[W_1,S_1,C_1]}}|{\mathbf{Q}^{[W_1,S_1,C_1]}},\mathbf{X}_W,\mathbf{Y})\nonumber
\end{align} for all $n \in [N]$. Combining all of these inequalities, we get
\begin{align}
%ND &\geq NL+H({A^{[W',S',C']}}|{Q^{[W',S',C']}},X_W,Y)\nonumber \\
D &\geq L+\frac{1}{N}H({\mathbf{A}^{[W_1,S_1,C_1]}}|{\mathbf{Q}^{[W_1,S_1,C_1]}},\mathbf{X}_W,\mathbf{Y}).\label{eq:line13new}
\end{align} To further lower bound $H({\mathbf{A}^{[W_1,S_1,C_1]}}|{\mathbf{Q}^{[W_1,S_1,C_1]}},\mathbf{X}_W,\mathbf{Y})$, we can write
\begin{align}
&H({\mathbf{A}^{[W_1,S_1,C_1]}}|{\mathbf{Q}^{[W_1,S_1,C_1]}},\mathbf{X}_W,\mathbf{Y})\nonumber\\ &\geq H({\mathbf{A}^{[W_1,S_1,C_1]}}|{\mathbf{Q}^{[W_1,S_1,C_1]}},\mathbf{X}_W,\mathbf{Y},\mathbf{Y}_1)\nonumber\\ &= H({\mathbf{A}^{[W_1,S_1,C_1]}},\mathbf{X}_{W_1}|{\mathbf{Q}^{[W_1,S_1,C_1]}},\mathbf{X}_W,\mathbf{Y},\mathbf{Y}_1)\label{eq:line34}
\end{align} where~\eqref{eq:line34} follows from the fact that $X_{W_1}$ is recoverable from $A^{[W_1,S_1,C_1]},Q^{[W_1,S_1,C_1]},Y_1,W_1,S_1,C_1$. 

We will consider two cases as follows separately: (i) $\mathbf{X}_{W_1}$ is independent of $({\mathbf{Q}^{[W_1,S_1,C_1]}},\mathbf{X}_W,\mathbf{Y},\mathbf{Y}_1)$, and (ii) $\mathbf{X}_{W_1}$ and $({\mathbf{Q}^{[W_1,S_1,C_1]}},\mathbf{X}_W,\mathbf{Y},\mathbf{Y}_1)$ are not independent. 

In the case (i), $\mathbf{X}_{W_1}$ and $({\mathbf{Q}^{[W_1,S_1,C_1]}},\mathbf{X}_W,\mathbf{Y},\mathbf{Y}_1)$ are independent. That is, $H(\mathbf{X}_{W_1}|\mathbf{Q}^{[W_1,S_1,C_1]},\mathbf{X}_W,\mathbf{Y},\mathbf{Y}_1) = H(\mathbf{X}_{W_1}) = L$. Then, we can continue lower bounding~\eqref{eq:line34} as follows:
\begin{align}
&H({\mathbf{A}^{[W_1,S_1,C_1]}}|{\mathbf{Q}^{[W_1,S_1,C_1]}},\mathbf{X}_W,\mathbf{Y})\nonumber\\
&\geq H({\mathbf{A}^{[W_1,S_1,C_1]}},\mathbf{X}_{W_1}|{\mathbf{Q}^{[W_1,S_1,C_1]}},\mathbf{X}_W,\mathbf{Y},\mathbf{Y}_1)\nonumber\\
&=H(\mathbf{X}_{W_1}|\mathbf{Q}^{[W_1,S_1,C_1]},\mathbf{X}_W,\mathbf{Y},\mathbf{Y}_1)\nonumber\\
&\quad + H(\mathbf{A}^{[W_1,S_1,C_1]}|\mathbf{Q}^{[W_1,S_1,C_1]},\mathbf{X}_W,\mathbf{Y},\mathbf{X}_{W_1},\mathbf{Y}_1)\nonumber\\
& = L +H(\mathbf{A}^{[W_1,S_1,C_1]}|\mathbf{Q}^{[W_1,S_1,C_1]},\mathbf{X}_W,\mathbf{Y},\mathbf{X}_{W_1},\mathbf{Y}_1)\nonumber\\
&\geq L.\label{eq:line16new}
\end{align} 

By~\eqref{eq:line13new} and~\eqref{eq:line16new}, $D\geq L+L/N$, as was to be shown. 

\begin{comment}
\begin{align}
D &\geq L+\frac{1}{N}H({A^{[W',S',C']}}|{Q^{[W',S',C']}},X_W,Y)\nonumber\\
&\geq L+\frac{1}{N}L=L(1+\frac{1}{N}).\nonumber
\end{align}
\end{comment}
%  $X_{W'}$ and $({{Q}^{[W',S',C']}},X_W,Y,Y')$ are not independent: In this case, 

In the case (ii), due to the dependence of $\mathbf{X}_{W_1}$ and $({\mathbf{Q}^{[W_1,S_1,C_1]}},\mathbf{X}_W,\mathbf{Y},\mathbf{Y}_1)$ and the linearity of $X_W,Y,X_{W_1},Y_1$, it must hold that ${Y=c_W X_W+c_{W_1} X_{W_1}+Z}$ and ${Y_1=c'_W X_W + c'_{W_1} X_{W_1}+c'Z}$ for some ${c'_W,c'_{W_1},c'}\in \mathbb{F}^{\times}_q$, where $Z = \sum_{i\in S\setminus \{W,W_1\}} c_{i} X_i$, and ${c_i}$'s are the elements in the sequence $C$ (i.e., the coefficients of the messages in the side information $Y$). We proceed by lower bounding~\eqref{eq:line29}, when ${W,S,C,Y}$ are replaced by ${W_1,S_1,C_1,Y_1}$. To this end, we choose an arbitrary message, say $X_{W_2}$, such that ${{W_2}\not\in S}$. Based on Lemma~\ref{privacy & Recoverability}, there exist ${S_2\in \mathcal{S}}$, ${C_2\in \mathcal{C}}$ with $W_2\in S_2$, and accordingly ${Y_2 \triangleq Y^{[S_2,C_2]}}$, such that \[H(\mathbf{A}_1^{[W,S,C]}|\mathbf{Q}_1^{[W,S,C]},\mathbf{X}_W,\mathbf{Y}) = H(\mathbf{A}_1^{[W_2,S_2,C_2]}|\mathbf{Q}_1^{[W_2,S_2,C_2]},\mathbf{X}_W,\mathbf{Y}).\] 

\begin{comment}
\begin{align}
&H(\mathbf{A}_1^{[W,S,C]}|\mathbf{Q}_1^{[W,S,C]},\mathbf{X}_W,\mathbf{Y}) \nonumber\\&= H(\mathbf{A}_1^{[W_2,S_2,C_2]}|\mathbf{Q}_1^{[W_2,S_2,C_2]},\mathbf{X}_W,\mathbf{Y})\nonumber   
\end{align}
\end{comment}

We consider two cases as follows separately: (ii.1) $\mathbf{X}_{W_2}$ is independent of $({\mathbf{Q}^{[W_2,S_2,C_2]}},\mathbf{X}_W,\mathbf{Y},\mathbf{Y}_2)$, and (ii.2) $\mathbf{X}_{W_2}$ depends on $({\mathbf{Q}^{[W_2,S_2,C_2]}},\mathbf{X}_W,\mathbf{Y},\mathbf{Y}_2)$. In the case (ii.1), the proof follows the exact same line as in the proof of case (i), and hence not repeated.

In the case (ii.2), one can readily verify that $X_{W_2}$ must be recoverable from ${Q^{[W_2,S_2,C_2]},X_W,Y,Y_2}$ since $\mathbf{X}_{W_2}$ depends on $({\mathbf{Q}^{[W_2,S_2,C_2]}},\mathbf{X}_W,\mathbf{Y},\mathbf{Y}_2)$. As a result,  
${Y_2=c''_{W_2} X_{W_2}+c''(c_{W_1} X_{W_1}+Z)}$ for some $c''_{W_2},c''\in \mathbb{F}^{\times}_q$. It is also easy to verify that $X_{W_2}$ is not recoverable from $X_{W_1},Y_1,Y_2$, and $\mathbf{X}_{W_2}$ is independent of $(\mathbf{X}_{W_1},\mathbf{Y}_1,\mathbf{Y}_2)$. On the other hand, we have
\begin{align}
D &\geq L+H({\mathbf{A}_1^{[W_1,S_1,C_1]}}|{\mathbf{Q}_1^{[W_1,S_1,C_1]}},\mathbf{X}_{W_1},\mathbf{Y}_1)\label{eq:line15}\\ &= L+H({\mathbf{A}_1^{[W_2,S_2,C_2]}}|{\mathbf{Q}_1^{[W_2,S_2,C_2]}},\mathbf{X}_{W_1},\mathbf{Y}_1).\nonumber
\end{align}
where~\eqref{eq:line15} follows from~\eqref{eq:line29} that holds for $W_1,S_1,C_1$ (and $Y_1$), because $D$ is defined as the maximum total entropy of answers from all servers over all ${W'\in [K]},{S'\in \mathcal{S}},{C'\in \mathcal{C}}$ such that ${W'\in S'}$. Similarly as before, by the server-symmetry assumption 
it can also be shown that
\begin{align}
D &\geq L+\frac{1}{N}H({\mathbf{A}^{[W_2,S_2,C_2]}}|{\mathbf{Q}^{[W_2,S_2,C_2]}},\mathbf{X}_{W_1},\mathbf{Y}_1)\nonumber\\&\geq L+\frac{1}{N}H({\mathbf{A}^{[W_2,S_2,C_2]}}|{\mathbf{Q}^{[W_2,S_2,C_2]}},\mathbf{X}_{W_1},\mathbf{Y}_1,\mathbf{Y}_2).\label{eq:lineseclast}
\end{align} 

Since $\mathbf{X}_{W_2}$ is independent of $(\mathbf{Q}^{[W_2,S_2,C_2]},\mathbf{X}_{W_1},\mathbf{Y}_1,\mathbf{Y}_2)$, and $X_{W_2}$ is recoverable from $A^{[W_2,S_2,C_2]}$, $Q^{[W_2,S_2,C_2]}$, and $Y_2$, a simple application of the chain rule of entropy yields 
\begin{equation}\label{eq:linelast}
H({\mathbf{A}^{[W_2,S_2,C_2]}}|{\mathbf{Q}^{[W_2,S_2,C_2]}},\mathbf{X}_{W_1},\mathbf{Y}_1,\mathbf{Y}_2)\geq L.    
\end{equation} By~\eqref{eq:lineseclast} and~\eqref{eq:linelast}, $D\geq L+L/N$, as was to be shown. 

\begin{comment}
\begin{align}
&H({\mathbf{A}^{[W_2,S_2,C_2]}}|{\mathbf{Q}^{[W_2,S_2,C_2]}},\mathbf{X}_{W_1},\mathbf{Y}_1,\mathbf{Y}_2)\nonumber\\
& = H({\mathbf{A}^{[W_2,S_2,C_2]}},\mathbf{X}_{W_2}|{\mathbf{Q}^{[W_2,S_2,C_2]}},\mathbf{X}_{W_1},\mathbf{Y}_1,\mathbf{Y}_2)\nonumber
\end{align}
\end{comment}

% \\ &\Rightarrow D \geq L+\frac{1}{N}L=L(1+\frac{1}{N}).\nonumber

%This completes the proof.

\subsection{Achievability Proof for Theorem~\ref{thm:PIRCSI-II}}
% for arbitrary $N$, $K\geq 1$ and ${2\leq M\leq K}$ 

In this section, we propose a server-symmetric PIR-CSI--II protocol for each $2\leq M\leq K-1$ that achieves a rate equal to ${\mathrm{C}_{W-\text{\it II}}}$ for the corresponding $M$. %Assume that for the case when $M=2,K$, the messages are ${L = N}$ bits long, and 

For $3 \leq M \leq K-1$, we assume that each message consists of ${N^2}$ symbols over $\mathbb{F}_q$. %Similar to the Multi-Server PIR-CSI--I protocol, 
%This assumption can be relaxed to each message being only $N-1$ $\mathbb{F}_q$-symbols long (see Section~\ref{sec:PIRCSIIpro}). 
For $M=2$ and $M=K$, each message can be as short as one $\mathbb{F}_q$-symbol. \vspace{0.25cm}
% no such assumption is made, i.e., 

\textbf{Multi-Server PIR-CSI--II Protocols:}\vspace{0.125cm} 

\textit{{Case of $M=2$:}}
The user randomly selects one of the two indices, say $i$, in $S$ as follows: $i=W$ with probability $1/K$, and $i=S\setminus W$ with probability $(K-1)/K$. Then, the user requests the message $X_i$ from a randomly chosen server. 

%sends to the servers two sets $I_1 = \{i\}$ and $I'_1=\{c\}$, where $i$ is the selected index by the user, and $c$ is chosen uniformly at random from $\mathbb{F}^{\times}_q$. The user and the servers form a \textit{super-message} $\hat{X}_1=cX_i$, and utilize the Sun-Jafar protocol with $1$ super-message in such a way that the user can download the message $\hat{X}_1$.

\textit{{Case of $3\leq M\leq K-1$:}} The proposed scheme for this case consists of four steps. In the first step, given $W,S,C$ (and $Y^{[S,C]}$), the user utilizes the scheme of~\cite{heidarzadeh2018capacity} for single-server PIR-CSI (under Model~II), which we refer to as \emph{Modified Randomized Partitioning (MRP)}, to construct two sequences $I_1,I_2$ of indices in $[K]$, each of length $M-1$, and two sequences $I'_1,I'_2$ of elements in $\mathbb{F}^{\times}_q$, each of length $M-1$. (For details, see~\cite[Section~V-B]{heidarzadeh2018capacity}.) Next, the user and the servers follow the steps 2-4 of the Multi-Server PIR-CSI--I protocol. 

\vspace{0.05cm}  

\textit{{Case of $M=K$:}} Assume, w.l.o.g., that $W = {1}$. The user randomly chooses an element $c'_1$ from $\mathbb{F}^{\times}_q\setminus \{c_1\}$, where $c_1$ is the coefficient of $X_{1}$ in the side information $Y^{[S,C]}$. Then, the user requests the linear combination $c'_1X_1+c_2X_2+\dots+c_KX_K$ from a randomly chosen server, where $c_i$ is the coefficient of $X_i$ in the side information $Y^{[S,C]}$. 

\textbf{Example 2.} (Case of ${3\leq M\leq \frac{K}{2}+1}$) Assume that there are ${N = 2}$ servers, ${K=10}$ messages from $\mathbb{F}_{3^4}$ (i.e., each message has $4$ symbols over $\mathbb{F}_3$), and ${M=4}$. Suppose that the user demands the message $X_1$ and has a coded side information $X_1+X_2+2X_3+X_4$. Note that, for this example, $W = 1$, ${S= \{1,2,3,4\}}$, and ${C=\{1,1,2,1\}}$. 

First, the user labels $2$ sequences as $I_1,I_2$, each of length $M-1=3$. For creating these sequences, the user selects $w\in \{0,1\}$ and $t\in\{2,3\}$ randomly chosen indices from $W=\{1\}$ and $T = \{5,6,7,8,9,10\}$, respectively, according to a carefully designed probability distribution (ensuring $W$-privacy of the MRP scheme) on all $(w,t)$ such that $w+t= 3$. For this example, the probability distribution is given by
\begin{align*}
p(w,t) \triangleq 
\begin{cases}
0.4, & w=0,t=3\\
0.6, & w=1,t=2\\
\end{cases}  
\end{align*} 
Suppose that the user chooses ${w=1},{t=2}$, and selects the $3$ indices $\{1,6,10\}$. Following the MRP protocol, the user forms the sequence $I_1=S \setminus W=\{2,3,4\}$ and ${I_2=\{1,6,10\}}$. 

The user then labels $2$ sequences as $I'_1,I'_2$, each of length $3$. For this example, suppose that the user creates ${I'_1=\{1,2,1\}, I'_2=\{1,1,1\}}$. Then, the user randomly reorders the elements of $I_i$ and $I'_i$, and constructs
\begin{align*}
\tilde{I}_1 = \{3,2,4\}, &\quad  \tilde{I}'_1 = \{2,1,1\}\\ 
\tilde{I}_2 = \{1,10,6\}, &\quad \tilde{I}'_2 = \{1,1,1\}.
\end{align*}
Next, the user sends a uniform random permutation of $\{I^{*}_1,I^{*}_2\}$, say $\{I^{*}_2,I^{*}_1\}$, to both servers, where $I^{*}_i = (\tilde{I}_i,\tilde{I}'_i)$. The user and the servers form two super-messages as follows: %$\hat{X}_1=X_2+X_4+X_1+2X_3$, $\hat{X}_2=X_2+X_9+2X_6+X_5$, and $\hat{X}_3=2X_7+X_5+X_1+X_8$.
\begin{align*}
\hat{X}_1 &=X_1+X_{10}+X_6\\
\hat{X}_2 &=2X_3+X_2+X_4.
\end{align*}

\vspace{-0.125cm}
Finally, the user and the servers run the Sun-Jafar protocol as follows for the two super-messages $\hat{X}_1,\hat{X}_2$ in such a way that the user can privately download $\hat{X}_2$. For each $\hat{X}_i$, let $[\hat{X}_{i,1},\dots,\hat{X}_{i,4}]$ be an independent and uniform random permutation of the $4$ symbols (over $\mathbb{F}_3$) of $\hat{X}_i$. The user requests $3$ symbols from the first server and $3$ symbols from the second server as listed in Table~\ref{table:2}~\cite{sun2017capacity}, where the requested symbols are carefully designed linear combinations of symbols $\{\hat{X}_{i,j}\}_{i\in [2],j\in [4]}$. From the servers' answers, the user first obtains the super-message $\hat{X}_2=2X_3+X_2+X_4$, and then recovers the desired message $X_1$ by subtracting off $\hat{X}_2$ from the side information $X_1+X_2+2X_3+X_4$. 
For this example, the proposed protocol requires to download a total of $6$ symbols (over $\mathbb{F}_3$), achieving the rate of $4/6=2/3$.

\begin{table}[t!]
\caption{The queries/answers of Sun-Jafar protocol for $2$ servers and $2$ messages $\hat{X}_1,\hat{X}_2$, when the user demands $\hat{X}_{2}$\cite{sun2017capacity}.}\vspace{-0.125cm}
    \centering
    \scalebox{1.15}{
\begin{tabular}{ |c|c| } 
 \hline
 Server 1 & Server 2 \\
 \hline
 $\hat{X}_{1,1}$ & $\hat{X}_{1,2}$ \\ [0.5ex]
 $\hat{X}_{2,1}$ & $\hat{X}_{2,2}$ \\ [0.5ex]
 $\hat{X}_{2,3}+\hat{X}_{1,2}$ & $\hat{X}_{2,4}+\hat{X}_{1,1}$ \\ 
 \hline
\end{tabular}}\vspace{-0.35cm}
\label{table:2}
\end{table}

\textbf{Example 3.} (Case of ${\frac{K}{2}\leq M\leq K-1}$) Assume that there are ${N = 2}$ servers, ${K=5}$ messages from $\mathbb{F}_{3^4}$ (i.e., each message has $4$ symbols over $\mathbb{F}_3$), and ${M=4}$. Suppose that the user demands the message $X_1$ and has coded a side information $X_1+X_2+2X_3+X_4$. Note that, for this example, $W = 1$, ${S= \{1,2,3,4\}}$, and ${C=\{1,1,2,1\}}$. 

First, the user labels $2$ sequences as $I_1,I_2$, each of length $M=4$. For creating these sequences, the user selects ${w\in \{0,1\}}$ and $t\in\{2,3\}$ randomly chosen indices from $W=\{1\}$ and $T = \{2,3,4\}$, respectively, according to a carefully designed probability distribution (ensuring $W$-privacy of the MRP scheme) on all $(w,t)$ such that $w+t= 3$. For this example, the probability distribution is given by
\begin{align*}
p(w,t) \triangleq 
\begin{cases}
0.4, & w=0,t=3\\
0.6, & w=1,t=2\\
\end{cases}  
\end{align*} 
Suppose that the user chooses ${w=1},{t=2}$, and selects the $3$ indices $\{1,2,4\}$. Following the MRP protocol, the user forms the sequence $I_1=S=\{1,2,3,4\}$ and ${I_2=\{5,1,2,4\}}$. 

The user then labels $2$ sequences as $I'_1,I'_2$, each of length $4$. For this example, suppose that the user creates ${I'_1=\{2,1,2,1\}, I'_2=\{1,2,1,1\}}$. Then, the user randomly reorders the elements of $I_i$ and $I'_i$, and constructs
\begin{align*}
\tilde{I}_1 = \{1,4,2,3\}, &\quad  \tilde{I}'_1 = \{2,1,1,2\}\\ 
\tilde{I}_2 = \{1,5,2,4\}, &\quad \tilde{I}'_2 = \{2,1,1,1\}.
\end{align*}
Next, the user sends a uniform random permutation of $\{I^{*}_1,I^{*}_2\}$, say $\{I^{*}_2,I^{*}_1\}$, to both servers, where $I^{*}_i = (\tilde{I}_i,\tilde{I}'_i)$. The user and the servers form two super-messages as follows: %$\hat{X}_1=X_2+X_4+X_1+2X_3$, $\hat{X}_2=X_2+X_9+2X_6+X_5$, and $\hat{X}_3=2X_7+X_5+X_1+X_8$.
\begin{align*}
\hat{X}_1 &=2X_1+X_5+X_2+X_4\\
\hat{X}_2 &=2X_1+X_4+X_2+2X_3.
\end{align*}

Finally, the user and the servers run the Sun-Jafar protocol as explained in the previous example for the two super-messages $\hat{X}_1,\hat{X}_2$ in such a way that the user can privately download $\hat{X}_2$. The user requests $3$ symbols from the first server and $3$ symbols from the second server as listed in Table~\ref{table:2}~\cite{sun2017capacity}. From the servers' answers, the user first obtains the super-message $\hat{X}_2=2X_1+X_4+X_2+2X_3$, and then recovers the desired message $X_1$ by subtracting off the side information $X_1+X_2+2X_3+X_4$ from $\hat{X}_2$. For this example, the proposed protocol requires to download a total of $6$ symbols (over $\mathbb{F}_3$), achieving the rate of $4/6=2/3$.

\begin{lemma}
The Multi-Server PIR-CSI--II protocols for the cases of ${M=2}$, ${3\leq M\leq K-1}$, and ${M=K}$ are server-symmetric protocols that satisfy the recoverability and the $W$-privacy conditions, and achieve the rates $1$, $N/(N+1)$, and $1$, respectively.
\end{lemma}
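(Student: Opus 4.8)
The plan is to handle the three protocols separately, verifying server-symmetry, recoverability, $W$-privacy, and the claimed rate for each. For the two corner cases $M=2$ and $M=K$, server-symmetry is immediate since the contacted server is chosen uniformly at random and independently of everything else, and the rate is trivially $1$ because the answer consists of exactly one message's worth of $\mathbb{F}_q$-symbols; the only substantive point is $W$-privacy. For $3\le M\le K-1$, server-symmetry, recoverability, and the rate $N/(N+1)$ follow almost verbatim from the proof of Lemma~\ref{lemma 2}: the protocol runs the Sun-Jafar scheme on two super-messages $\hat{X}_1,\hat{X}_2$ that (by the MRP construction) are linearly independent $\mathbb{F}_q$-combinations of the independent uniform messages and hence are themselves i.i.d.\ uniform on $\mathbb{F}_{q^m}$, so the Sun-Jafar guarantees (server-symmetry, recoverability of one message, rate $(1+1/N)^{-1}$ for two messages of length $N^2$) transfer directly; again $W$-privacy is the real content.

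For $M=2$, recoverability holds since the requested $X_i$ is either $X_W$ itself or, together with $Y^{[S,C]}=c_WX_W+c_{S\setminus W}X_{S\setminus W}$, determines $X_W$. For privacy, I would first use Bayes' rule with Model~II to see that, conditioned on $\mathbf{W}=w$, the set $\mathbf{S}$ is uniform over the $2$-subsets of $[K]$ containing $w$, i.e.\ $\mathbf{S}=\{w,\mathbf{s}'\}$ with $\mathbf{s}'$ uniform on $[K]\setminus\{w\}$; a one-line computation then gives $\Pr(i=j\mid\mathbf{W}=w)=1/K$ for every $j$ (the value $j=w$ gets $1/K$ from the first branch, and each $j\ne w$ gets $\tfrac{K-1}{K}\cdot\tfrac{1}{K-1}$ from the second), so the query seen by server $n$ is independent of $\mathbf{W}$. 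For $M=K$, taking $W=1$ w.l.o.g., recoverability holds because the answer $c_1'X_1+\sum_{i\ge 2}c_iX_i$ minus the side information $c_1X_1+\sum_{i\ge 2}c_iX_i$ equals $(c_1'-c_1)X_1$ with $c_1'-c_1\in\mathbb{F}_q^{\times}$; for privacy, here $\mathbf{S}=[K]$ and $\mathbf{C}$ is uniform on $(\mathbb{F}_q^{\times})^K$, the query is the coefficient vector $\mathbf{v}$ with $\mathbf{v}_W=\mathbf{c}_W'$ and $\mathbf{v}_i=\mathbf{c}_i$ for $i\ne W$, and conditioned on $\mathbf{W}=w$ the pair $(\mathbf{c}_w,\mathbf{c}_w')$ is uniform over ordered pairs of distinct elements of $\mathbb{F}_q^{\times}$, so $\mathbf{v}_w$ is uniform on $\mathbb{F}_q^{\times}$ and $\mathbf{v}$ is uniform on $(\mathbb{F}_q^{\times})^K$ regardless of $w$ (this uses $q\ge 3$). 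In both corner cases, since $\mathbf{X}_{[K]}$ is independent of $(\mathbf{W},\mathbf{S},\mathbf{C})$ and of all protocol randomness, the independence of the query from $\mathbf{W}$ yields $I(\mathbf{W};\mathbf{Q}_n^{[W,S,C]},\mathbf{A}_n^{[W,S,C]},\mathbf{X}_{[K]})=0$ for all $n$.

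For $3\le M\le K-1$, recoverability follows from the super-message construction inherited from the single-server MRP scheme of~\cite{heidarzadeh2018capacity}: the super-message the user privately obtains via Sun-Jafar, together with $Y^{[S,C]}$, isolates a nonzero multiple of $X_W$ (e.g.\ it equals $\sum_{i\in S\setminus W}c_iX_i=Y^{[S,C]}-c_WX_W$ as in Example~2, or $Y^{[S,C]}+c_WX_W$ as in Example~3). The $W$-privacy argument then mirrors that of Lemma~\ref{lemma 2}: all servers know the labeling of $\hat{X}_1,\hat{X}_2$ from the transmitted sequences $\{I^{*}_{\sigma(i)}\}$; the Sun-Jafar protocol guarantees that, conditioned on a server's Sun-Jafar query, the posterior on which super-message is being downloaded equals the prior induced by MRP; and the $W$-privacy of MRP guarantees that, conditioned on the sequences, every index in $[K]$ is equally likely to be the demand, hence so it is conditioned on the full query at any server. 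I expect the main obstacle to be making this composition precise for this case: justifying that, given a fixed super-message labeling, the Sun-Jafar queries are generated independently of which super-message is the requested one, invoking the $W$-privacy of MRP (which must be quoted carefully from~\cite{heidarzadeh2018capacity}), and checking that the argument is uniform across the different sub-regimes of $M$, in particular the two distinct super-message constructions illustrated by Examples~2 and~3. The corner cases $M=2,K$ are by comparison routine.
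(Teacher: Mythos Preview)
Your proposal is correct and, for the main case $3\le M\le K-1$, follows exactly the approach the paper indicates (the paper's own proof is simply ``similar to the proof of Lemma~\ref{lemma 2}, and hence omitted''), namely: MRP produces two linearly independent $\mathbb{F}_q$-combinations, hence i.i.d.\ uniform super-messages, so Sun-Jafar's server-symmetry, recoverability, and rate $(1+1/N)^{-1}$ transfer, and $W$-privacy follows by composing the privacy of MRP with the message-index privacy of Sun-Jafar. For the corner cases $M=2$ and $M=K$ your explicit Bayes/uniformity computations (showing the requested index, respectively the coefficient vector, is uniform and independent of $\mathbf{W}$) supply detail the paper leaves entirely implicit; this is a welcome addition rather than a departure, since Lemma~\ref{lemma 2}'s argument does not literally apply to those single-query protocols.
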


\begin{proof}
The proof is similar to the proof of Lemma \ref{lemma 2}, and hence omitted to avoid repetition. 
\end{proof}

\bibliographystyle{IEEEtran}
\bibliography{QGTRefs}

% Generated by IEEEtran.bst, version: 1.14 (2015/08/26)
\begin{thebibliography}{10}
\providecommand{\url}[1]{#1}
\csname url@samestyle\endcsname
\providecommand{\newblock}{\relax}
\providecommand{\bibinfo}[2]{#2}
\providecommand{\BIBentrySTDinterwordspacing}{\spaceskip=0pt\relax}
\providecommand{\BIBentryALTinterwordstretchfactor}{4}
\providecommand{\BIBentryALTinterwordspacing}{\spaceskip=\fontdimen2\font plus
\BIBentryALTinterwordstretchfactor\fontdimen3\font minus
  \fontdimen4\font\relax}
\providecommand{\BIBforeignlanguage}[2]{{%
\expandafter\ifx\csname l@#1\endcsname\relax
\typeout{** WARNING: IEEEtran.bst: No hyphenation pattern has been}%
\typeout{** loaded for the language `#1'. Using the pattern for}%
\typeout{** the default language instead.}%
\else
\language=\csname l@#1\endcsname
\fi
#2}}
\providecommand{\BIBdecl}{\relax}
\BIBdecl

\bibitem{sun2017capacity}
H.~Sun and S.~A. Jafar, ``The capacity of private information retrieval,''
  \emph{IEEE Trans.~on Info.~Theory}, vol.~63, no.~7, pp. 4075--4088, 2017.

\bibitem{banawan2018capacity}
K.~Banawan and S.~Ulukus, ``The capacity of private information retrieval from
  coded databases,'' \emph{IEEE Trans.~on Info.~Theory}, vol.~64, no.~3, pp.
  1945--1956, 2018.

\bibitem{kadhe2017private}
S.~Kadhe, B.~Garcia, A.~Heidarzadeh, S.~El~Rouayheb, and A.~Sprintson,
  ``Private information retrieval with side information: The single server
  case,'' in \emph{55th Annual Allerton Conference on Commun., Control, and
  Computing (Allerton)}, 2017, pp. 1099--1106.

\bibitem{heidarzadeh}
A.~Heidarzadeh, B.~Garcia, S.~Kadhe, S.~El~Rouayheb, and A.~Sprintson, ``On the
  capacity of single-server multi-message private information retrieval with
  side information,'' in \emph{Proc. 56th Annual Allerton Conference on
  Commun., Control, and Computing}, Oct 2018, pp. 180--187.

\bibitem{li2018single}
S.~Li and M.~Gastpar, ``Single-server multi-message private information
  retrieval with side information,'' in \emph{56th Annual Allerton Conference
  on Commun., Control, and Computing (Allerton)}, 2018, pp. 173--179.

\bibitem{heidarzadeh2018capacity}
A.~Heidarzadeh, F.~Kazemi, and A.~Sprintson, ``Capacity of single-server
  single-message private information retrieval with coded side information,''
  in \emph{Proc. IEEE Info.~Theory Workshop (ITW'18)}, Nov 2018.

\bibitem{heidarzadeh2019capacity}
\BIBentryALTinterwordspacing
------, ``Capacity of single-server single-message private information
  retrieval with private coded side information,'' Jan 2019. [Online].
  Available: \url{arXiv:1901.09248}
\BIBentrySTDinterwordspacing

\bibitem{tandon2017capacity}
R.~Tandon, ``The capacity of cache aided private information retrieval,'' in
  \emph{55th Annual Allerton Conference on Commun., Control, and Computing},
  Oct 2017, pp. 1078--1082.

\bibitem{wei2018fundamental}
Y.-P. Wei, K.~Banawan, and S.~Ulukus, ``Fundamental limits of cache-aided
  private information retrieval with unknown and uncoded prefetching,''
  \emph{IEEE Trans.~on Info.~Theory}, 2018.

\bibitem{li2018converse}
S.~Li and M.~Gastpar, ``Converse for multi-server single-message pir with side
  information,'' \emph{arXiv preprint arXiv:1809.09861}, 2018.

\bibitem{chen2017capacity}
Z.~Chen, Z.~Wang, and S.~Jafar, ``The capacity of private information retrieval
  with private side information,'' \emph{arXiv preprint arXiv:1709.03022},
  2017.

\bibitem{shariatpanahi2018multi}
S.~P. Shariatpanahi, M.~J. Siavoshani, and M.~A. Maddah-Ali, ``Multi-message
  private information retrieval with private side information,'' in \emph{Proc.
  IEEE Info.~Theory Workshop (ITW)}, Nov 2018.

\bibitem{kazemi2019single}
F.~Kazemi, E.~Karimi, A.~Heidarzadeh, and A.~Sprintson, ``Single-server
  single-message online private information retrieval with side information,''
  \emph{arXiv preprint arXiv:1901.07748}, 2019.

\end{thebibliography}
\end{document}